\newtheorem{theorem}{Theorem}[section]
\newtheorem{proof}{Proof}[section]
\begin{document}
\bibliographystyle{IEEEtran}
\title{Cooperative Jamming and Relay Selection for Covert Communications}

%
%
%

\author{Chan Gao, Bin Yang, Dong Zheng, Xiaohong Jiang, Tarik Taleb
\thanks{C. Gao is with the National Engineering Laboratory for Wireless Security, School of Cybersecurity, Xi'an University of Posts and Telecommunications, Xi'an 710121, China. E-mail: gaochan001@163.com. }
\thanks{B. Yang is with the School of Computer and Information Engineering, Chuzhou University, China. E-mail: yangbinchi@gmail.com.}
\thanks{D. Zheng is with the National Engineering Laboratory for Wireless Security, School of Cybersecurity, Xi'an University of Posts and Telecommunications, Xi'an 710121, China. E-mail:zhengdong@xupt.edu.cn.}
\thanks{X. Jiang is with the School of Systems Information Science, Future University Hakodate, Japan. E-mail: jiang@fun.ac.jp.}
\thanks{T. Taleb is with the Information Technology and Electrical Engineering, University of Oulu, Oulu 90570, Finland, and also with the Department of Computer and Information Security, Sejong University, Seoul 05006, South Korea. E-mail: tarik.taleb@oulu.fi.}
\thanks{An earlier version of part work in this paper was appeared in the author's thesis~\cite{gao}.} }

\markboth{IEEE Transactions on Communications}%
{Submitted paper}
%



\maketitle

\begin{abstract}
This paper investigates the covert communications via cooperative jamming and relay selection in a wireless relay system, where a source intends to transmit a message to its destination with the help of a selected relay, and a warden attempts to detect the existence of wireless transmissions from both the source and relay, while friendly jammers send jamming signals to prevent warden from detecting the transmission process. To this end, we first propose two relay selection schemes, namely random relay selection (RRS) and max-min relay selection (MMRS), as well as their corresponding cooperative jamming (CJ) schemes for ensuring covertness in the system. We then provide theoretical modeling for the covert rate performance under each relay selection scheme and its CJ scheme and further explore the optimal transmit power controls of both the source and relay for covert rate maximization.
Finally, extensive simulation/numerical results are presented to validate our theoretical models and also to illustrate the covert rate performance of the relay system under
cooperative jamming and relay selection.
\end{abstract}

\begin{IEEEkeywords}
Wireless relay systems, covert communications, relay selection, cooperative jamming, covert rate.
\end{IEEEkeywords}

%
\IEEEpeerreviewmaketitle

\section{Introduction}
Wireless communication technologies have fundamentally transformed our daily life in the past decade, and are expected to create a
fully-connected digital world in the coming sixth-generation (6G) era, where enabling the Internet of everything
will promote unprecedented transmissions of sensitive personal data over wireless channels\cite{6G}.
Due to the broadcasting and open characteristics of wireless channels, wireless systems are highly vulnerable to security threats both in civil and military applications.
To cope with such threats, it is desired to explore a promising security method providing strong protection for
numerous security-sensitive applications in 5G/6G wireless systems.

The available security methods mainly utilize encryption technologies implemented at upper-layer protocol.
Such methods usually require high computational power because of their complexity\cite{yan2}.
However, there exist many Internet-of-Things (IoT) devices with limited power.
As a complementary to the encryption technologies, physical layer security (PLS) is emerging as a promising class of technologies,
which are to exploit the interference and noise of wireless channels to ensure the secrecy of communications.
Specially, covert communications are a promising PLS technology
aiming to hide the process of wireless transmission from being detected by a warden.

The available works on the studies of covert communications mainly focus on one-hop and two-hop wireless relay systems \cite{yan3,10,11,12,14,15,16,17,19,20,21,25,Soltani,22,23,24,Liang,27,Aided,Ma,yan1,yan4,29,30,31,32,Forouzesh} (see Related Works in Section II),
where a transmitter attempts to covertly transmit information to a receiver with/without the help of a relay.
For the one-hop wireless systems, these works explore the fundamental covert performances in terms of covert rate and detection error probability under various scenarios
such as different channel models \cite{14,15,16,17,yan3}, channel uncertainty \cite{19}, noise uncertainty \cite{20}, channel inversion power control \cite{21}, delay constraints \cite{25}, jamming signals \cite{Soltani,22,23,24,Liang,27,Aided,Ma}, and unmanned aerial vehicle (UAV) scenarios \cite{yan1}.
The covert performances are further investigated in the two-top wireless relay systems \cite{29,30,31,32,Forouzesh}, where one/multiple wardens try to detect the presence of wireless communications from a transmitter to a relay and from the relay to a destination.

\color{blue}


\color{black}
It is notable that cooperative jamming and relay selection are two critical schemes for improving covert performances.
Using the cooperative jamming scheme, the available works mainly utilize the
jamming signals to confuse the detection of the warden, while ignoring the serious interference
of the jamming signals on the legitimate receiver which may lead to the degradation of system performances.
Hence, one fundamental issue is to design a cooperative jamming scheme such that
the jamming signals can interfere with an illegal warden and reduce the interference to the legal
receiver as much as possible. On the other hand, the relay selection is of great importance for the
improvement of covert performances.
\textcolor{black}{Specifically, the work in~\cite{Gao,9241054} demonstrates the potential of
relay selection in enhancing covert performances.  Actually, these two
schemes have major impacts on covert performances.
Recently, a joint cooperative jamming and relay selection scheme is proposed to
ensure the covert communication of the second hop transmission in a two-hop wireless relay system~\cite{Jiang},
which selects one relay and one jammer with maximal channel gain and minimal one to the destination,
respectively. Although this work can motivate us to explore the joint scheme design,
many important issues have not been investigated in such a system. For instance, both the covert communications of the two hop transmissions are generally required,
multiple idle relays can also be selected as jammers to further enhance the covert performances,
and the channel gain of the first hop transmission should be carefully considered into the relay selection since the channel gain can significantly affect the covert performances.
To address these critical issues, this paper
explores a joint design for the cooperative jamming and relay selection in wireless relay systems with careful consideration of the two hop covert communications,
multiple jammers and two hop channel gains. Furthermore, the theoretical models are developed to
characterize the covert rate performance in the systems.
The main contributions of
this paper can be summarized as follows.
}

 \begin{itemize}
\item
We consider a wireless relay system consisting of one source Alice, a number of potential relays, one destination Bob, and one warden.
In such a scenario, we propose two relay selection schemes, namely random relay selection (RRS) and max-min relay selection (MMRS),
as well as their corresponding cooperative jamming schemes for ensuring covertness.

\item
By applying a joint jamming and RRS scheme, we first examine the transmission strategy design for the source and determine the detection error probability at the warden. We then derive the expressions for three performance metrics (i.e., transmission outage probability, detection error probability of the warden, and covert rate), and also explore the covert rate maximization through efficient numerical searches under the given covertness and outage requirements.
\item
We further apply a joint jamming and MMRS scheme. Under this scheme,
we first examine the transmission strategy design for the source.
We then derive the detection error probability of the warden, and optimize the transmit power of the source to maximize the covert rate with a constraint of covertness requirement through efficient numerical searches.

\item
Finally, extensive simulation and numerical results are presented to validate our theoretical models and also to illustrate the covert rate performance of the relay system under joint jamming and relay selection.
\end{itemize}

We organize the rest of the paper as follows.
Section II introduces related works. The system model and performance metrics are presented in Section III.
Section IV explores the covert rate performance under the joint jamming and RRS scheme.
Section V explores the covert rate performance under the joint jamming and MMRS scheme.
We provide the numerical results in Section VI.
This paper is concluded in Section VII.

\section{Related Works}

Regarding the one-hop wireless systems, the works focus on covert communications with/without the aid of jamming signals.
Without the jamming signals, the authors in~\cite{10,11,12} prove that when the number of channel uses $n$ goes to infinity, $\mathcal{O}(\sqrt{n})$ bits of message can be transmitted covertly to a legitimate receiver.
Following these works, the same results are proved to be achievable under various channel models such as discrete memoryless channels \cite{14,15}, multiple-access channels \cite{16} and state-dependent channels \cite{17}.
In addition, channel uncertainty \cite{19} and noise uncertainty \cite{20} are used to enhance covert performances.
Later, the work in \cite{21} adopts channel inversion power control to achieve covert communications.
The work in \cite{25} further explores the impact of delay constraints on covert communications.
The authors in \cite{yan4} examines delay-intolerant covert
communications in additive white Gaussian noise (AWGN) channels with a finite block length and enhances the
covert rate by using uniformly distributed random transmit power. Then, they study the optimality of Gaussian signalling for covert communications with an upper bound \cite{yan3}, and further explores the jointly optimizes the flying location and wireless communication transmit power for a UAV conducting covert operations in \cite{yan1} .


For the one-hop wireless systems with the jamming signals,
the works in \cite{Soltani,22} explore that a friendly node sends artificial noise to confuse the detection of a warden.
For the scenario of multiple interferers, the work in \cite{23} investigates the impact of the density and the transmit power of the interferers
on the covert performance, where the locations of the interferers follow the Poisson point process.
The work in \cite{24} further optimizes the covert rate through the jamming signals from the interferers in the scenario consisting of
a source equipped with multiple antennas, a destination, randomly distributed wardens, and interferers.
The work in \cite{Liang} indicates that the covert communications are achievable via artificial noise from a friendly unmanned aerial vehicle.
In a device-to-device (D2D) underlaid cellular system, the covert communications are proved to be achievable with the aid of artificial noise from a base station (BS) \cite{27}.
Recently, each friendly jamming node can be selected to independently transmit jamming signals to defeat the warden based on an uncoordinated jammer selection scheme \cite{Aided}.
The authors in \cite{Ma} use the inherent uncertainty of backscatter transmissions to achieve active and passive covert communications.

Regarding the two-hop wireless relay systems, the existing works mainly focus on the scenario including only one relay without jamming signals.
The work in \cite{29} studied the covert performances in terms of the detection error probability and covert rate under the AWGN channels.
The channel uncertainty is used to degrade the detection of warden in \cite{30}.
The authors in \cite{31} investigate the achievable performance of covert communication in a greedy relay-aided wireless system, where
the relay also attempts to covertly send its own messages to a destination when it forwards the messages from a source.
In the work \cite{32}, they explore the performance of covert communication and associated costs for a self-sustained relay, where the source provides energy to the relay for forwarding its information and the relay's covert transmission is forbidden.
 Recently, the authors in \cite{Forouzesh} study the covert communication and secure transmission in the scenario with multiple untrusted relays,
 where the destination and the source can inject jamming signals for achieving covert communications.
\color{black}

\begin{figure}[t]
\centering
\includegraphics[width=0.75\textwidth]{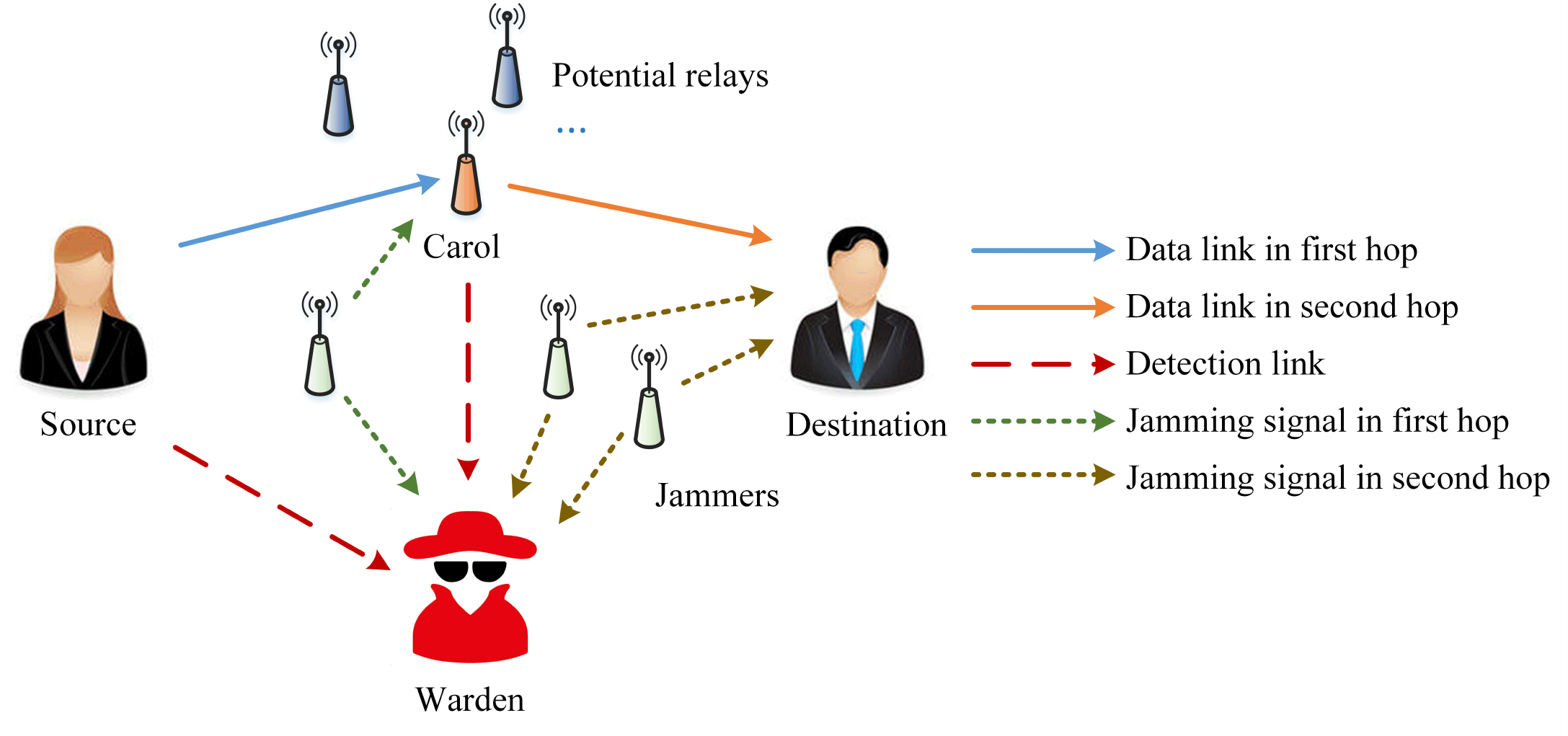}\\
\DeclareGraphicsExtensions.
\caption{Covert communication scenario.}
\label{Fig.1}
\end{figure}

\section{System Models}

\subsection{Network Model}
As shown in Fig.~\ref{Fig.1}, we consider a wireless relay system composed of
one source Alice ($A$), $n$ potential relays, one destination Bob ($B$), and one warden Willie ($W$).
Alice aims to covertly transmit a message to Bob with the aid of a relay Carol ($C$) selected
from these relays,
while Willie attempts to detect whether Alice sends a message or not.
The potential relays can also be selected as friendly jammers broadcasting jamming signals to confuse the detection of Willie.
Alice and Carol employ the same covert transmit power $P_T$ to send a message and all friendly Jammers
have a transmit power $P_J$, which is no more than a maximum power constraint $P{_{max}}$.
We assume that each of Alice, Carol, Jammers, and Bob is equipped with a single antenna.

\subsection{Channel Model}
We consider a time-slotted relay system, where a quasi-static Rayleigh fading is used to
model wireless channels. With the fading, all channel coefficients keep unchanged with one time slot and
change independently from one time slot to another.
The channel fading coefficients between Alice and Carol, Alice and Willie, Carol and Bob, Carol and Willie,
any friendly Jammer ($J_i$) and Willie, $J_i$ and Carol, and $J_i$ and Bob are denoted as
$h_{AC}$, $h_{AW}$, $h_{CB}$, $h_{CW}$, $h_{J_iW}$, $h_{J_iC}$, and $h_{J_iB}$, respectively, which
follow a complex Gaussian distribution with zero mean and unit variance.
The $|h{_{ij}}|^2$ is the corresponding channel gain, where $ij \in \{AC, AW, CB, CW, J_iC, J_iW,J_iB\}$.
We use AWGN with variance $\sigma^2$ to model the channel noise.
We assume that Carol works in half-duplex mode and hence the transmission from Alice to Carol and
that from Carol to Bob occur in different time slots.
Without loss of generality, we assume that the total system bandwidth is 1 MHz.

\subsection{Relay Selection Schemes}
We consider two relay selection schemes (i.e., RRS and MMRS) in our study.

\textbf{RRS:}
Under such a scheme, Alice randomly chooses one from all potential relays, which will help him to forward the message to Bob.

\textbf{MMRS:}
Under this scheme,
a potential relay can be selected as the relay Carol
if the following condition holds:
the maximum value of the minimum channel gain between
$|h_{AC}|^{2}$ and $|h_{CB}|^{2}$
equals the maximum one of all minimum channel gains.
Here, each minimum channel gain corresponds to the minimum one between
$|h_{Ai}|^{2}$ and $|h_{{i}B}|^{2}$ for any potential relay $i$.

\subsection{Cooperative Jamming Scheme}
Under each relay selection scheme, its corresponding cooperative jamming (CJ) scheme is further proposed for enhancing covert performance.
With such a scheme,
the potential relays (except Carol) may be selected as jammers that generate artificial noise to confuse the detection of a warden,
while reducing the interference to Carol and Bob as much as possible
according to the following jammer selection scheme.
For the first transmission phase, any potential relay $J_i$ can serve as a jammer
only if the channel gain from $J_i$ to Carol $C$ is smaller than a threshold $\alpha$, i.e.,
${|h_{{J_i}{C}}|^{2}}<\alpha$, where $J_i$ is not Carol.
As for the second transmission phase,
$J_i$ serves as a jammer if the channel gain from $J_i$ to Bob is smaller than $\alpha$,
i.e.,${|h_{{J_i}{B}}|^{2}}<\alpha$.

\subsection{Performance Metrics}\label{PerformanceMetrics}

Willie attempts to decide whether Alice sends a message or not.
To this end, it performs two hypotheses, namely, null hypothesis $H_0$
and alternative hypothesis $H_1$.
Under $H_0$, the transmitter does not transmit a message, while it transmits under $H_1$.
Then, we give the following definitions of two performance metrics.

\textbf{Detection error probability:}
It is defined as the probability $\zeta$ that Willie misjudges whether Alice sends a message or not, which equals the sum of the probability of false alarm ${\mathbb P}_{FA}$ and that of missed detection ${\mathbb P}_{MD}$.
Here, the false alarm means that Willie approves $H_1$, but $H_0$ is true actually.
The missed detection means that Willie approves $H_0$, but $H_1$ is true actually.

\textbf{Covert rate:}
It is defined as the achievable rate at which Alice can covertly send messages to Bob
while maintaining high detection error probability at Willie.


\color{black}

\section{Covert Rate under a Joint CJ and RRS Scheme}
\subsection{Detection At Willie}
At a time slot, Willie attempts to judge whether Alice transmits a message or not according to the
two hypotheses introduced in Section~\ref{PerformanceMetrics}.
Based on the hypotheses, the received signal $y_{W}$ at Willie from Alice/Carol under the joint CJ and RRS scheme is given by
\begin{flalign}
y_{W} =
&\left\{
\begin{aligned}
&\sum_{J_{i}} \sqrt{P_{J}}h_{J_{i}W}x_{j}+ n_{W}, & \text{if}\,H_{0}\,\text{is~true} \\
&\sqrt{P_{T}}h_{kW}x_{k} + \sum_{J_{i}} \sqrt{P_{J}}h_{J_{i}W}x_{j}+ n_{W},&\text{if}\,H_{1}\,\text{is~true}
\end{aligned}
\right.
\end{flalign}
where $x_j$ is the signal transmitted by jammers $J_{i}$, $x_k$ is the signal transmitted by Alice/Carol, $k\in\{A,C\}$, and $n_W$ is the AWGN at Willie with variance $\sigma{_{W}^2}$, i.e.,$n_{W} \thicksim \mathcal{CN} (0,\sigma{_{W}^2})$.

According to the Neyman-Pearson criterion,
Willie uses the following optimal decision to minimize his detection error probability~\cite{31}:
\begin{flalign}\label{V}
Y \underset{D_{0}}{\overset{D_{1}}\gtrless} \lambda,
\end{flalign}
where
$D_{0}$ and $D_{1}$ denote that Willie decides to approve $H_{0}$ and $H_{1}$, respectively,
$\lambda$ is a detection threshold, and $Y =\frac{1}{m} \sum_{i=1}^{m}{|y_{W}^{i}|}^{2}$ is the average received power at Willie in the time slot.
Here, $y_{W}^{i}$ is the received signal at Willie in $i$th channel use, and $m$ is the number of channel uses.
Considering an infinite number of channel uses in our study, we have
\begin{flalign}
Y =
&\left\{
\begin{aligned}
&\sum_{J_{i}}P_{J}{|h_{J_{i}W}|}^{2} + \sigma_{W}^{2} , &\text{if}\,H_{0}\,\text{is true} \\
&P_{T}{|h_{kW}|}^{2}+\sum_{J_{i}}P_{J}{|h_{J_{i}W}|}^{2} + \sigma_{W}^{2} . &\text{if}\,H_{1}\,\text{is true}
\end{aligned}
\right.
\end{flalign}

\subsection{Optimal Detection Threshold and Minimum Detection
Error Probability}
To determine the optimal detection threshold and minimum detection error probability,
we first derive the detection error probability at Willie given in the following Theorem.

\begin{theorem}
Under the CJ and RRS schemes, the detection error probability $\zeta$ at Willie can be determined as
\begin{flalign}\label{eq:rezeta}
&\zeta = \left\{
\begin{aligned}
&1+\frac{\Gamma(l)}{(l-1)!}-\left(\frac{P_{T}}{P_{T}-P_{J}}\right)^{l}\text{exp}\left(\frac{\sigma_{W}^{2} - \lambda}{P_{T} }\right), \text{if} \,\lambda \geq \sigma_{W}^{2}\\
&1,\,\,\,\,\,\,\quad\qquad\qquad\qquad\quad\qquad\quad\qquad\qquad\qquad\text{otherwise}
\end{aligned}
\right.
\end{flalign}
where $l$ denotes the number of friendly jammers
and gamma function $\Gamma(l)$ is given by
\begin{align}\label{eq:gammaFunction}
\Gamma(l)=\int_{0}^{\infty} \ x^{(l-1)}e^{-x}\, dx.
\end{align}
\end{theorem}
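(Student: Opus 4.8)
\emph{Proof idea.} The plan is to write $\zeta={\mathbb P}_{FA}+{\mathbb P}_{MD}$ and to evaluate each probability directly from the hypothesis-conditioned forms of $Y$, treating $l$ as the (given) number of active jammers. Since jammer selection uses only the channels from the $J_i$ to Carol or Bob, which are independent of the channels $h_{J_iW}$ to Willie, the coefficients $h_{kW}$ and $h_{J_iW}$ are i.i.d.\ $\mathcal{CN}(0,1)$; hence $|h_{kW}|^{2}$ and the $|h_{J_iW}|^{2}$ are i.i.d.\ unit-mean exponentials and $G:=\sum_{J_i}|h_{J_iW}|^{2}$ is Erlang with shape $l$ and unit rate (density $g^{l-1}e^{-g}/(l-1)!$, $g\ge 0$), so $P_JG$ has the correspondingly scaled density. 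Therefore ${\mathbb P}_{FA}={\mathbb P}(Y\ge\lambda\mid H_0)={\mathbb P}(P_JG\ge\lambda-\sigma_W^{2})$ and ${\mathbb P}_{MD}={\mathbb P}(Y<\lambda\mid H_1)={\mathbb P}(P_T|h_{kW}|^{2}+P_JG<\lambda-\sigma_W^{2})$.

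First I would dispose of the regime $\lambda<\sigma_W^{2}$: since $P_JG\ge 0$ and $P_T|h_{kW}|^{2}+P_JG\ge 0$, we have $Y\ge\sigma_W^{2}>\lambda$ almost surely under both hypotheses, so ${\mathbb P}_{FA}=1$, ${\mathbb P}_{MD}=0$, and therefore $\zeta=1$, which is the second branch of \eqref{eq:rezeta}.

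For $\lambda\ge\sigma_W^{2}$ put $u:=\lambda-\sigma_W^{2}\ge 0$. Then ${\mathbb P}_{FA}={\mathbb P}(P_JG\ge u)$ is the complementary CDF of the Erlang variable $P_JG$ at $u$, namely $e^{-u/P_J}\sum_{j=0}^{l-1}(u/P_J)^{j}/j!$, a regularized upper-incomplete-gamma value in $l$ and $u/P_J$. For ${\mathbb P}_{MD}$ I would obtain the law of $W:=P_T|h_{kW}|^{2}+P_JG$ from its Laplace transform $\mathcal{L}_W(s)=(1+P_Ts)^{-1}(1+P_Js)^{-l}$, which is the product of the transforms of an exponential and of an Erlang with distinct scale parameters, and then perform the partial-fraction expansion $\mathcal{L}_W(s)=c_0(1+P_Ts)^{-1}+\sum_{k=1}^{l}c_k(1+P_Js)^{-k}$. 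The residue at the simple pole is $c_0=\left(P_T/(P_T-P_J)\right)^{l}$, which is precisely the prefactor in \eqref{eq:rezeta}; inverting each term and integrating from $0$ to $u$ then yields ${\mathbb P}(W<u)$ as a combination of an $e^{-u/P_T}$ term and polynomial-times-$e^{-u/P_J}$ terms.

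The remaining step is to add ${\mathbb P}_{FA}$ and ${\mathbb P}_{MD}$ and simplify. I would use $c_0+\sum_{k=1}^{l}c_k=\mathcal{L}_W(0)=1$ to produce the leading constant $1$ in \eqref{eq:rezeta}, reorganize the resulting double sum by the finite geometric-series identity $\sum_{k=0}^{j}c_k=\left(P_T/(P_T-P_J)\right)^{l-j}$, and recombine exponentials via $e^{-u/P_J}=e^{-u/P_T}\,e^{-u(1/P_J-1/P_T)}$, so that the polynomial-times-exponential remainder of ${\mathbb P}_{FA}$ together with the matching part of ${\mathbb P}_{MD}$ collapses into the gamma-function term times $\left(P_T/(P_T-P_J)\right)^{l}\exp\!\left((\sigma_W^{2}-\lambda)/P_T\right)$, giving the first branch of \eqref{eq:rezeta}. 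I expect the ${\mathbb P}_{MD}$ step to be the main obstacle: convolving an exponential with an Erlang of a \emph{different} scale forces the higher-order-pole partial fraction, and telescoping the ensuing double sum into the compact closed form is the only delicate bookkeeping; the coincident-pole case $P_T=P_J$ would have to be handled separately or recovered as a limit.
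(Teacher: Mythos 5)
Your decomposition $\zeta=\mathbb{P}_{FA}+\mathbb{P}_{MD}$, your treatment of the $\lambda<\sigma_W^2$ branch, and your computation of $\mathbb{P}_{FA}$ as the complementary CDF of the Erlang variable $P_JG$ all coincide with the paper (the paper's ``$\Gamma(l)$'' is really the upper incomplete gamma $\Gamma\bigl(l,(\lambda-\sigma_W^2)/P_J\bigr)$, as its side remark indicates). The divergence is in $\mathbb{P}_{MD}$, and it is fatal to your final step. You propose to compute the exact law of $W=P_T|h_{kW}|^2+P_JG$ via its Laplace transform and partial fractions. That is a legitimate and in fact more rigorous route, but if you carry it out you do \emph{not} land on the first branch of \eqref{eq:rezeta}. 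Writing $u=\lambda-\sigma_W^2$, the exact CDF is
\begin{align}
\mathbb{P}(W<u)=\mathbb{P}(P_JG<u)-e^{-u/P_T}\Bigl(\tfrac{P_T}{P_T-P_J}\Bigr)^{l}\frac{\gamma\bigl(l,(1/P_J-1/P_T)u\bigr)}{(l-1)!},
\end{align}
so that $\zeta_{\mathrm{exact}}=1-\bigl(P_T/(P_T-P_J)\bigr)^{l}e^{-u/P_T}\,\gamma\bigl(l,(1/P_J-1/P_T)u\bigr)/(l-1)!$: the incomplete-gamma term from $\mathbb{P}_{FA}$ cancels entirely and a \emph{lower} incomplete gamma factor survives attached to the exponential. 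Your own identity $\sum_{k=0}^{j}c_k=\bigl(P_T/(P_T-P_J)\bigr)^{l-j}$ shows the collapse you assert cannot happen: for the double sum to reduce to the stated $\Gamma(l)/(l-1)!$ term you would need every partial sum $\sum_{k=0}^{j}c_k$ with $j<l$ to equal $1$, which fails already at $j=0$ since $c_0=\bigl(P_T/(P_T-P_J)\bigr)^{l}\neq 1$.

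The paper reaches \eqref{eq:rezeta} by a different, non-rigorous shortcut: it conditions on the jammer gains, writes $\mathbb{P}(\text{miss}\mid G)=1-\exp\bigl((P_JG+\sigma_W^2-\lambda)/P_T\bigr)$ \emph{without} restricting to the event $P_JG\le\lambda-\sigma_W^2$ (on the complement this expression is negative, whereas the true conditional probability is $0$), and then factors the expectation into a product of moment generating functions $\mathbb{E}\bigl[e^{P_J|h_{J_iW}|^2/P_T}\bigr]=P_T/(P_T-P_J)$ taken over the full support. This is exactly the step your exact calculation refuses to make, which is why the two answers differ (note the paper's $\mathbb{P}_{MD}$ is even negative as $\lambda\downarrow\sigma_W^2$). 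To prove the theorem \emph{as stated} you must adopt the paper's unconditional-MGF shortcut; your blueprint, executed faithfully, proves a different (correct) formula. Your remarks on the independence of the jammer-selection channels from the Willie channels and on the coincident-pole case $P_T=P_J$ are sound but do not rescue the final simplification.
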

where $x\thicksim\Gamma(l,a)$, $a= (\lambda-\sigma_{W}^{2})/P_{J}$.

\begin{proof}
Based on the definition of detection error probability, we have
\begin{align}\label{eq1:zeta}
 \zeta&= \mathbb{P}_{FA} + \mathbb{P}_{MD}.
\end{align}

We first determine $\mathbb{P}_{FA}$.
We use $l$ to denote the number of friendly jammers, and then we have
\begin{align}\label{eq:fa1}
\mathbb{P}_{FA} &=P(\sum_{i=0}^{l}P_{J}{|h_{J_{i}W}|}^{2} + \sigma_{W}^{2}\geq\lambda) \nonumber\\
&= P\left(\sum_{i=0}^{l}{|h_{J_{i}W}|}^{2}\geq\frac{\lambda-\sigma_{W}^{2}}{P_{J}}\right)\nonumber\\
&=\int_{\left(\frac{\lambda-\sigma_{W}^{2}}{P_{J}}\right)}^{\infty}f_{\sum_{i=0}^{l}{|h_{J_{i}W}|}^{2}}(x)dx.
\end{align}

Since the probability density function (PDF) of the random variable ${|h_{J_{i}W}|}^{2}$
is given by
\begin{align}
f_{{|h_{J_{i}W}|}^{2}}(x)=e^{-x}, \text{if}\,0< x <\infty
\end{align}
using the convolution theorem, the PDF of $\sum_{i=0}^{l}{|h_{J_{i}W}|}^{2}$ can be determined as
\begin{align}\label{eq:midu}
f_{\sum_{i=0}^{l}{|h_{J_{i}W}|}^{2}}(x)=\frac{1}{(l-1)!}x^{(l-1)}e^{-x},\text{if}\,0< x <\infty
\end{align}

Thus, we obtain
\begin{flalign}\label{eq:FA1}
&\mathbb{P}_{FA} = \left\{
\begin{aligned}
&\frac{\Gamma(l)}{(l-1)!}, \qquad\quad\text{if} \,\lambda \geq \sigma_{W}^{2}\\
&1.\qquad\qquad\qquad\text{otherwise}
\end{aligned}
\right.
\end{flalign}

We proceed to determine $\mathbb{P}_{MD}$. We use $Z$ to denote the event that there are $l$ potential relays serving as friendly jammers.
By applying the law of total probability, we have
\begin{flalign} \label{PMD}
\mathbb{P}_{MD} &=  P(P_{T}{|h_{kW}|}^{2} + \sum_{i=0}^{l}P_{J}{|h_{J_{i}W}|}^{2} + \sigma_{W}^{2} < \lambda)\nonumber\\
&= \sum_{l=0}^{n-1}P (P_{T}{|h_{kW}|}^{2} + \sum_{i=0}^{l}P_{J}{|h_{J_{i}W}|}^{2} + \sigma_{W}^{2} < \lambda|Z)P(Z) \nonumber\\
&=\mathbb{E}_{{|h_{J_{i}W}|}^{2}}\left[1-\text{exp}\left(\frac{\sum_{i=0}^{l}P_{J}{|h_{J_{i}W}|}^{2}+\sigma_{W}^{2}-\lambda}{P_{T}}\right)\right]\nonumber\\
&=1-\mathbb{E}_{{|h_{J_{i}W}|}^{2}}\text{exp}\left(\frac{\sum_{i=0}^{l}P_{J}{|h_{J_{i}W}|}^{2}+\sigma_{W}^{2}-\lambda}{P_{T}}\right)\nonumber\\
&=1-\text{exp}\left(\frac{\sigma_{W}^{2}-\lambda}{P_{T}}\right)\prod_{i=0}^{l}\mathbb{E}_{{|h_{J_{i}W}|}^{2}}\text{exp}\left(\frac{P_{J}{|h_{J_{i}W}|}^{2}}{P_{T}}\right)\nonumber\\
&=1-\text{exp}\left(\frac{\sigma_{W}^{2}-\lambda}{P_{T}}\right)\prod_{i=0}^{l}\int_{0}^{\infty}\text{exp}\left(\frac{P_{J}{|h_{J_{i}W}|}^{2}}{P_{T}}\right)f_{{|h_{J_{i}W}|}^{2}}(x)dx,
\end{flalign}
where $\mathbb{E}(\cdot)$ denotes the expectation function and the conditional expectation function is used to derive $\mathbb{P}_{MD}$.

Substituting $f_{{|h_{J_{i}W}|}^{2}}(x)$ into~(\ref{PMD}), we obtain
\begin{flalign}\label{eq:MD1}
&\mathbb{P}_{MD} =\left\{
\begin{aligned}
&1 - \left(\frac{P_{T}}{P_{T}-P_{J}}\right)^{l}\text{exp}\left(\frac{\sigma_{W}^{2} - \lambda}{P_{T} }\right), \text{if} \,\lambda > \sigma_{W}^{2}\\
&0.\,\,\,\,\quad\quad\quad\qquad\qquad\qquad\qquad\qquad\text{otherwise}
\end{aligned}
\right.
\end{flalign}

Substituting~(\ref{eq:FA1}) and~(\ref{eq:MD1}) into~(\ref{eq1:zeta}),
(\ref{eq:rezeta}) follows.

\end{proof}


When $\lambda \leq\sigma_{W}^{2}$, $\zeta =1$. This means that Willie cannot detect the transmission from Alice to Carol and the one from Carol to Bob.

Thus, we only consider the case of $\lambda > \sigma_{W}^{2}$.
Taking the derivation of (\ref{eq:rezeta}) with respect to $\lambda$, we have
\begin{flalign}\label{eq:parzata}
\frac{\partial\zeta}{\partial\lambda} &= -\frac{(\lambda-\sigma_{W}^{2})^{(l-1)}\text{exp}(\frac{\sigma_{W}^{2} - \lambda}{P_{J} })}{P_{J}^{2}(l-1)!}\nonumber \\
&+\frac{1}{P_{T}}\left(\frac{P_{T}}{P_{T}-P_{J}}\right)^{l}\text{exp}\left(\frac{\sigma_{W}^{2} - \lambda}{P_{T} }\right).
\end{flalign}

Then, the optimal threshold $\lambda^{*}$ is the solution of $\frac{\partial\zeta}{\partial\lambda} = 0$.
Substituting $\lambda^{*}$ into (\ref{eq:rezeta}),
we obtain the minimum detection error probability $\zeta^{*}=\zeta(\lambda^{*})$.

\color{black}
\subsection{Covert Rate Modeling}
To model the fundamental covert rate performance, we first determine the transmission outage probability from Alice to Bob.
The transmission outage means that the received signal strength at the receiver Carol/Bob is smaller than its required threshold
$\theta$ so that the receiver cannot successfully recover the original message.

We derive the transmission outage probability in the following Theorem.
\begin{theorem}
We use $P_{to}$ to denote the transmission outage probability. Then, we have
\begin{flalign} \label{eq:pto}
P_{to}&=1-\text{exp}\left(-\frac{\theta(\sigma_{C}^{2}+\sigma_{B}^{2})}{P_{T}}\right)\left[\frac{1-e^{(-\alpha(1+K))}}{(1+K)(1-e^{-\alpha})}\right]^{2n-2},
\end{flalign}
where $K=\theta P_{J}/P_{T}$.
\end{theorem}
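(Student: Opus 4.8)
The plan is to exploit the two-phase structure of decode-and-forward relaying. Successful end-to-end delivery requires that neither the Alice$\to$Carol hop nor the Carol$\to$Bob hop be in outage, and these two events are independent: the hops occur in different time slots, involve disjoint fading coefficients, and under RRS the identity of the relay Carol is chosen independently of all channel gains. Hence I would first write $1-P_{to}=P_{1}P_{2}$, where $P_{j}$ is the probability that the received SINR at the $j$-th hop's receiver is at least the threshold $\theta$.

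For the first hop, condition on the jammers' gains to Carol. The no-outage event reads $P_{T}|h_{AC}|^{2}\ge\theta\big(\sigma_{C}^{2}+\sum_{i}P_{J}|h_{J_{i}C}|^{2}\big)$, the sum running over the active jammers. Since $|h_{AC}|^{2}$ is unit-mean exponential and independent of the jamming term, its tail probability gives a conditional success probability $\exp\!\big(-\tfrac{\theta\sigma_{C}^{2}}{P_{T}}\big)\prod_{i}\exp\!\big(-K|h_{J_{i}C}|^{2}\big)$ with $K=\theta P_{J}/P_{T}$. Taking the expectation and using independence of the jammer gains factorizes the product over the $n-1$ non-Carol relays into identical terms $\mathbb{E}\big[\exp(-K|h_{J_{i}C}|^{2})\big]$. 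Here the jammer-selection rule is what fixes this expectation: a relay acts as a jammer in this phase only when $|h_{J_{i}C}|^{2}<\alpha$, so the relevant gain is a truncated exponential with density $e^{-x}/(1-e^{-\alpha})$ on $(0,\alpha)$, and the elementary integral $\tfrac{1}{1-e^{-\alpha}}\int_{0}^{\alpha}e^{-(1+K)x}\,dx=\tfrac{1-e^{-\alpha(1+K)}}{(1+K)(1-e^{-\alpha})}$ is the per-relay factor. Thus $P_{1}=\exp\!\big(-\tfrac{\theta\sigma_{C}^{2}}{P_{T}}\big)\big[\tfrac{1-e^{-\alpha(1+K)}}{(1+K)(1-e^{-\alpha})}\big]^{n-1}$.

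The second hop is handled identically, with the jammer--Bob gains thresholded by $\alpha$ and receiver noise $\sigma_{B}^{2}$, yielding $P_{2}=\exp\!\big(-\tfrac{\theta\sigma_{B}^{2}}{P_{T}}\big)\big[\tfrac{1-e^{-\alpha(1+K)}}{(1+K)(1-e^{-\alpha})}\big]^{n-1}$. Forming $P_{1}P_{2}$ merges the two exponentials into $\exp\!\big(-\tfrac{\theta(\sigma_{C}^{2}+\sigma_{B}^{2})}{P_{T}}\big)$ and the two bracketed factors into the power $2n-2$, and subtracting from $1$ gives~(\ref{eq:pto}). I expect the delicate part to be the bookkeeping of the jamming population in each phase — tracking that there are $n-1$ potential jammers, that conditioning a relay to \emph{be} a jammer turns its channel to the legitimate receiver into a truncated exponential on $(0,\alpha)$, and that the two hops decouple so the per-hop probabilities simply multiply; the integral evaluation itself is routine.
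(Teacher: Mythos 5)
Your proposal is correct and follows essentially the same route as the paper's proof: decompose the end-to-end outage into the two independent per-hop events, condition on the jammer gains, use the exponential tail of the direct-link gain, and evaluate the per-jammer factor as the Laplace transform of the truncated exponential on $(0,\alpha)$, yielding the bracketed term raised to the power $n-1$ per hop. The bookkeeping you flag (the $n-1$ potential jammers and the truncation induced by the selection rule) is handled identically in the paper, so there is nothing further to reconcile.
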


\begin{proof}
In our concerned two-hop wireless network, if the transmission is not an outage, each of the two transmissions from Alice to Carol and from Carol to Bob cannot be an outage.
Thus, the transmission outage probability $P_{to}$ is given by
\begin{flalign}\label{TPto}
P_{to}&=P(\text{SIR}_{AC}<\theta \bigcup \text{SIR}_{CB}<\theta)\nonumber \\
&=1-P(\text{SIR}_{AC}\geq \theta \bigcap \text{SIR}_{CB}\geq\theta),
\end{flalign}
where the signal-to-noise $\text{SIR}_{AC}$ at $C$ is expressed as
\begin{flalign}
\text{SIR}_{AC} = \frac{P_{T}{|h_{AC}|}^{2}}{\sum_{J_{i}}P_{J} {|h_{J_{i}C}|^{2}}+\sigma_{C}^2},
\end{flalign}
and
the signal-to-noise $\text{SIR}_{CB}$ at $B$ is expressed as
\begin{flalign}
\text{SIR}_{CB} = \frac{P_{T}{|h_{CB}|}^{2}}{\sum_{J_{i}}P_{J} {|h_{J_{i}B}|^{2}}+\sigma_{B}^2}.
\end{flalign}
Here, $\sigma_{C}^2$ and $\sigma_{B}^2$ represent the noise power.

Since the two events $\text{SIR}_{AC}<\theta$ and $\text{SIR}_{CB}<\theta$ are independent of each other,
we rewrite (\ref{TPto}) as
\begin{flalign}  \label{eq:1pto}
P_{to}&=1-P(\text{SIR}_{AC}\geq\theta)P(\text{SIR}_{CB}\geq\theta).
\end{flalign}

Since the PDF of $|h_{J_{i}C}|^{2}$ is given by
\begin{flalign}
f_{|h_{J_{i}C}|^{2}}(x)= \left\{
\begin{aligned}
&\frac{e^{-x}}{1-e^{-\alpha}},  \quad\text{if} \,0 \leq x \leq \alpha \\
&0,\qquad\qquad \text{if} \,x \geq \alpha
\end{aligned}
\right.
\end{flalign}
we have
\begin{flalign} \label{eq:1hop}
&P(\text{SIR}_{AC}\geq \theta)\nonumber \\
&=P\left[|h_{AC}|^{2}\geq\frac{\theta (\sum^{n-1}_{i=0,J_{i}\neq C}P_{J} {|h_{J_{i},C}|^{2}}+\sigma_{C}^2)}{P_{T}}\right]\nonumber \\
&=\mathbb{E}\left[\text{exp}\left(-\frac{ \theta (\sum^{n-1}_{i=0,J_{i}\neq C}P_{J} {|h_{J_{i}C}|^{2}}+\sigma_{C}^2)}{P_{T}}\right)\right]\nonumber \\
&=\text{exp}\left(-\frac{\theta \sigma_{C}^2}{P_{T}}\right)\prod^{n-1}_{i=0,J_{i}\neq C}\mathbb{E}\left[\text{exp}(-\frac{\theta P_{J}|h_{J_{i}C}|^{2}}{P_{T}})\right]\nonumber \\
&=\text{exp}\left(-\frac{\theta \sigma_{C}^2}{P_{T}}\right)\left[\frac{1-e^{(-\alpha(1+K))}}{(1+K)(1-e^{-\alpha})}\right]^{n-1},
\end{flalign}
where $K=\theta P_{J}/P_{T}$.

Similarly, we have
\begin{flalign} \label{eq:2hop}
P(\text{SIR}_{CB}\geq\theta)=&\text{exp}\left(-\frac{\theta \sigma_{B}^2}{P_{T}}\right)\left[\frac{1-e^{(-\alpha(1+K))}}{(1+K)(1-e^{-\alpha})}\right]^{n-1}.
\end{flalign}
Substituting (\ref{eq:1hop}) and (\ref{eq:2hop}) into  (\ref{eq:1pto}), we obtain (\ref{eq:pto}).
\end{proof}

Based on the $P_{to}$, we obtain the covert rate $R_{AB}$ from Alice to Bob as follows.
\begin{flalign} \label{eq:c}
R_{AB}=&(1-P_{to})\text{min}\{R_{AC},R_{CB}\},
\end{flalign}
where the achievable covert rate $R_{AC}$ from Alice to Carol is expressed as
 $R_{AC}=\text{log}_{2}(1+\text{SIR}_{AC})$,
and the achievable rate $R_{CB}$ from Carol to Bob is expressed as
$R_{CB}=\text{log}_{2}(1+\text{SIR}_{CB})$.

\color{black}

\subsection{Covert Rate Maximization}
Our goal is to maximize the covert rate $R_{AB}$ while maintaining a high detection error probability at Willie. It can be formulated as the following optimization problem.

\begin{subequations}\label{eq:optimalC1}
\begin{alignat}{2}
\text{Maximize}\quad &R_{AB} \\
\text{\textit{s.t.}} \quad \quad\quad\,\,\,\,\, &\zeta^{*}(P_{T}) \geq 1 - \varepsilon_{c},   \label{const:zeta}  \\
\, \quad \quad\quad\quad\,\,\,\, &P_{T}\leq P_{max}, \label{22c}\\
  \,\quad\quad\quad\quad\,\,\,\,&\varepsilon_{c}\in(0,1), \label{22d}
\end{alignat}\nonumber
\end{subequations}
where $\varepsilon_{c}$ represents the covert requirement,
(\ref{const:zeta}) represents the covert constraint, and (\ref{22c})
represents the range of the transmit power $P_{T}$.
The optimization problem can be solved using the stochastic gradient descent
algorithm.



\color{black}
\section{Covert Rate under a Joint CJ and MMRS Scheme}

\subsection{Detection At Willie}
Based on the two hypotheses introduced in the Section III-E, at a time slot, the received signal $y_{W}$ at Willie from Alice/Carol under the joint CJ and MMRS scheme is given by
\begin{flalign}\label{yw2}
y_{W} =
&\left\{
\begin{aligned}
&\sum_{J_{i}} \sqrt{P_{J}}h_{J_{i}W}x_{j}+ n_{W}, & \text{if}\,H_{0}\,\text{is~true} \\
&\sqrt{P_{T}}h_{kW}x_{k} + \sum_{J_{i}} \sqrt{P_{J}}h_{J_{i}W}x_{j}+ n_{W},&\text{if}\,H_{1}\,\text{is~true}
\end{aligned}
\right.
\end{flalign}
where $x_j$ is the signal transmitted by a jammer $i$, $x_k$ is the signal transmitted by Alice/Carol, $k\in\{A,C\}$,  $P_{T}$ is the transmit power of Alice/Carol, and $n_W$ is the AWGN at Willie with variance $\sigma{_{W}^2}$, i.e.,$n_{W} \thicksim \mathcal{CN} (0,\sigma{_{W}^2})$.

Based on (\ref{V}) and (\ref{yw2}), the average received power $Y$ at Willie can be determined as
\begin{flalign}\label{T}
Y =
&\left\{
\begin{aligned}
&\sum_{J_{i}}P_{J}{|h_{J_{i}W}|}^{2} + \sigma_{W}^{2} , &\text{if}\,H_{0}\,\text{is true} \\
& \frac{\theta(\sum_{J_{i}}P_{J}{|h_{J_{i}C}|}^{2} + \sigma_{C}^{2}){|h_{A,W}|}^{2}}{{|h_{AC}|}^{2}}\\
&+\sum_{J_{i}}P_{J}{|h_{J_{i}W}|}^{2} + \sigma_{W}^{2}. &\text{if}\,H_{1}\,\text{is true}
\end{aligned}
\right.
\end{flalign}


\subsection{Optimal Detection Threshold and Minimum Detection
Error Probability}
To determine the optimal detection threshold and minimum detection error probability, we first derive the detection error probability at Willie given in the following Theorem.
\begin{theorem}
Under the CJ and MMRS schemes, the detection error probability $\zeta$ at Willie can be determined as
\begin{flalign}\label{rezeta2}
&\zeta = \left\{
\begin{aligned}
&1+\frac{\Gamma(l)}{(l-1)!}-(\frac{1}{1-\varphi})^{l}\text{exp}(\frac{\varphi(\sigma_{W}^{2} - \lambda)}{P_{J}}), \ \text{if} \,\lambda \geq \sigma_{W}^{2}\\
&1,\,\qquad\qquad\qquad\qquad\quad\qquad\quad\qquad\qquad\qquad\text{otherwise}
\end{aligned}
\right.
\end{flalign}
where $l$ is the number of friendly jammers and $\varphi=(P_{J}|h_{AC}|^{2})/\theta(\sum_{i=0}^{l}P_{J}{|h_{J_{i}C}|}^{2} + \sigma_{C}^{2})$.
\end{theorem}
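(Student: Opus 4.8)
The plan is to follow the proof of Theorem~4.1 essentially verbatim, since the detection problem has exactly the same shape here; only the received power $Y$ under $H_1$ changes. Write $\zeta = \mathbb{P}_{FA} + \mathbb{P}_{MD}$ and handle the two terms separately.

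For $\mathbb{P}_{FA}$: under $H_0$ the expression for $Y$ in~(\ref{T}) is identical to the one used in the RRS analysis, namely $Y = \sum_{J_i}P_J|h_{J_iW}|^2 + \sigma_W^2$. Hence $\mathbb{P}_{FA}$ is computed exactly as in the derivation leading to~(\ref{eq:FA1}): the jammer-to-Willie gains $|h_{J_iW}|^2$ are i.i.d.\ unit-mean exponential, their sum has the Gamma density~(\ref{eq:midu}), and integrating its tail above $(\lambda-\sigma_W^2)/P_J$ gives $\mathbb{P}_{FA} = \Gamma(l)/(l-1)!$ for $\lambda \ge \sigma_W^2$ (with $\Gamma(l)=\Gamma(l,a)$, $a=(\lambda-\sigma_W^2)/P_J$) and $\mathbb{P}_{FA}=1$ otherwise.

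For $\mathbb{P}_{MD} = P(Y<\lambda\mid H_1)$, I would condition first on the first-hop gains $|h_{AC}|^2$ and $\{|h_{J_iC}|^2\}$ (and on the resulting set of active first-phase jammers, which is a function of those gains). Given this conditioning, the coefficient multiplying $|h_{AW}|^2$ in~(\ref{T}) is the \emph{deterministic} constant $P_J/\varphi$, so the first term of $Y$ equals $\tfrac{P_J}{\varphi}|h_{AW}|^2$. Conditioning additionally on the jammer-to-Willie gains $\{|h_{J_iW}|^2\}$ and using that $|h_{AW}|^2$ is unit-mean exponential and independent of all of these, one obtains $P(\tfrac{P_J}{\varphi}|h_{AW}|^2<c) = 1-e^{-\varphi c/P_J}$ with $c = \lambda - \sigma_W^2 - \sum_i P_J|h_{J_iW}|^2$; averaging over the i.i.d.\ $\{|h_{J_iW}|^2\}$ and pulling out the unit-exponential MGF $\mathbb{E}[e^{\varphi|h_{J_iW}|^2}] = 1/(1-\varphi)$ (for $\varphi<1$) factor by factor gives $\mathbb{P}_{MD} = 1 - (1-\varphi)^{-l}\exp(\varphi(\sigma_W^2-\lambda)/P_J)$ for $\lambda>\sigma_W^2$, and $\mathbb{P}_{MD}=0$ otherwise. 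Adding $\mathbb{P}_{FA}$ then yields~(\ref{rezeta2}).

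The step to get right — more a matter of careful bookkeeping than a genuine obstacle — is the probabilistic handling of $\varphi$: because $\varphi$ and the active-jammer set depend only on the ``to-Carol'' channel gains, the ``to-Willie'' gains $|h_{AW}|^2$ and $\{|h_{J_iW}|^2\}$ retain their unconditional unit-exponential law after the conditioning, which is exactly what licenses extracting $\mathbb{E}[e^{\varphi|h_{J_iW}|^2}]$ termwise and is also why $\varphi$ (a realization-dependent quantity) legitimately appears in the final expression. As in Theorem~4.1, the identity $P(\tfrac{P_J}{\varphi}|h_{AW}|^2<c)=1-e^{-\varphi c/P_J}$ is used formally even when $c$ may be negative, which is the same convention adopted there and is accurate in the covert regime $\lambda>\sigma_W^2$. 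A useful sanity check: $\varphi$ coincides with $P_J/P_T$ for the first-hop channel-inversion power $P_T=\theta(\sum_i P_J|h_{J_iC}|^2+\sigma_C^2)/|h_{AC}|^2$ implicit in~(\ref{T}), so~(\ref{rezeta2}) is precisely~(\ref{eq:rezeta}) evaluated at that realization-dependent $P_T$.
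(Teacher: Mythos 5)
Your proposal follows the same route as the paper: the same decomposition $\zeta=\mathbb{P}_{FA}+\mathbb{P}_{MD}$, the same reuse of the RRS false-alarm computation (identical $H_0$ statistic), and the same treatment of $\mathbb{P}_{MD}$ via the channel-inversion substitution $P_T=\theta(\sum_i P_J|h_{J_iC}|^2+\sigma_C^2)/|h_{AC}|^2$ followed by the unit-exponential MGF $\mathbb{E}[e^{\varphi|h_{J_iW}|^2}]=1/(1-\varphi)$. Your explicit conditioning argument justifying why the to-Willie gains keep their unconditional law is a cleaner bookkeeping of a step the paper performs implicitly, but the proof is essentially identical.
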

\begin{proof}
Based on the definition of detection error probability, we have
\begin{align}\label{eq1:zeta2}
 \zeta&= \mathbb{P}_{FA} + \mathbb{P}_{MD}.
\end{align}

Similar to the derivation process of $\mathbb{P}_{FA}$ under the RRS scheme,
$\mathbb{P}_{FA}$ under the MMRS scheme can be determined as
\begin{flalign}\label{eq:FA2}
&\mathbb{P}_{FA} = \left\{
\begin{aligned}
&\frac{\Gamma(l)}{(l-1)!}, \,\qquad\quad\text{if} \,\lambda \geq \sigma_{W}^{2},\\
&1.\qquad\qquad\qquad\,\text{otherwise}
\end{aligned}
\right.
\end{flalign}

We use $Z$ to denote the event that there are $l$ potential relays serving as friendly jammers.
Applying the law of total probability, $\mathbb{P}_{MD}$ is determined as
\begin{flalign}
\mathbb{P}_{MD} &=  P(P_{T}{|h_{AW}|}^{2} + \sum_{i=0}^{l}P_{J}{|h_{J_{i}W}|}^{2} + \sigma_{W}^{2} < \lambda)\nonumber\\
&= \sum_{l=0}^{n-1}P (P_{T}{|h_{AW}|}^{2} + \sum_{i=0}^{l}P_{J}{|h_{J_{i}W}|}^{2} + \sigma_{W}^{2} < \lambda|Z)P(Z) \nonumber\\
&=\sum_{l=0}^{n-1}P(\frac{\theta(\sum_{i=0}^{l}P_{J}{|h_{J_{i}C}|}^{2}+\sigma_{C}^{2})}{{|h_{AC}|}^{2}}{|h_{AW}|}^{2} \nonumber\\
&+\sum_{i=0}^{l}P_{J}{|h_{J_{i}W}|}^{2} + \sigma_{W}^{2} < \lambda)P(Z) \nonumber\\
&=\mathbb{E}_{{|h_{J_{i}W}|}^{2}}(1  \nonumber\\
&-\text{exp}(\frac{(\sum_{i=0}^{l}P_{J}{|h_{J_{i}W}|}^{2}+\sigma_{W}^{2}-\lambda){|h_{AC}|}^{2}}{\theta(\sum_{i=0}^{l}P_{J}{|h_{J_{i}C}|}^{2}+\sigma_{C}^{2})}))\nonumber\\
&=1-\text{exp}\frac{(\sigma_{C}^{2}-\lambda)\varphi}{P_{J}}\prod_{i=0}^{l}\mathbb{E}_{{|h_{J_{i}W}|}^{2}}\text{exp}(\sum_{i=0}^{l}{|h_{J_{i}W}|}^{2}\varphi)\nonumber\\
&=1-\text{exp}\frac{(\sigma_{C}^{2}-\lambda)\varphi}{P_{J}}\prod_{i=0}^{l}\int_{0}^{\infty}\text{exp}({|h_{J_{i}W}|}^{2}\varphi)f_{{|h_{J_{i}W}|}^{2}}(x)dx,
\end{flalign}
where $\mathbb{E}[\cdot]$ is the expectation operator.

\color{black}
The covert communication can be achieved if $\zeta \geq 1 - \varepsilon$ for any $\varepsilon > 0$.
Similarly, when $\lambda \leq \sigma_{W}^{2}$, $\zeta =1$. This means that Willie cannot detect the transmission from Alice to Carol and the one from Carol to Bob.
Hence, we consider the case of $\lambda > \sigma_{W}^{2}$.
Take the derivation of (\ref{rezeta2}) with respect to $\lambda$, we have
\begin{flalign}\label{parzeta2}
\frac{\partial\zeta}{\partial\lambda} = -\frac{(\lambda-\sigma_{W}^{2})^{(l-1)}\text{exp}(\frac{\sigma_{W}^{2} - \lambda}{P_{J} })}{P_{J}^{2}(l-1)!}+
\frac{\varphi}{P_{J}}\left(\frac{1}{1-\varphi}\right)^{l}\text{exp}\left[\frac{\varphi(\sigma_{W}^{2} - \lambda)}{P_{J}}\right].
\end{flalign}
Let $\lambda^{*}$ denote the solution of $\frac{\partial\zeta}{\partial\lambda} = 0$, then we have the optimal threshold $\lambda^{*}$ corresponding to the minimum value of $\zeta$, i.e., $ \zeta^{*}= \zeta(\lambda^{*})$.
\end{proof}

\color{black}

\subsection{Covert Rate Modeling}
Similarly, to model the fundamental covert rate performance, we first determine the transmission outage probability from Alice to Bob.
We derive the transmission outage probability under this relay selection scheme in the following Theorem.
\begin{theorem}
We use $P_{sto}$ to denote the transmission outage probability. Then, we have
\begin{flalign}\label{psto}
 P_{sto}&=P(\text{min}\{\text{SIR}_{AC},\text{SIR}_{CB}\}<\theta) \nonumber\\
&=Uk\text{exp}\left(\frac{-\theta\sigma_{C}^2}{P_{T}}\right)\left[\frac{1-e^{-(1+z)\alpha}}{(1-e^{-\alpha})(1+z)}\right]^l\nonumber\\
&+U(k-1)\text{exp}\left(\frac{-2k\theta\sigma_{C}^2}{P_{T}}\right)\left[\frac{1-e^{-(2kz+1)\alpha}}{(1-e^{-\alpha})(2kz+1)}\right]^l
\end{flalign}
where $z=\theta P_{J}/P_{T}$,$U=\sum_{0}^{n}\binom{n}{k}(-1)^{k}(\frac{1}{{2k-1}})$.
\end{theorem}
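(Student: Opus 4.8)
The plan is to follow the template of the RRS outage proof that leads to~(\ref{eq:pto}), adding two ingredients that reflect the MMRS rule: order statistics for the per-relay minimum channel gain, and a careful treatment of the joint success event on the two hops. First I would write $P_{sto}=1-P(\text{SIR}_{AC}\geq\theta,\ \text{SIR}_{CB}\geq\theta)$ and substitute the SIR definitions, so that the success event at Carol becomes
\[
|h_{AC}|^{2}\geq \frac{\theta\big(\sum_{J_i}P_J|h_{J_iC}|^{2}+\sigma_C^{2}\big)}{P_T}\quad\text{and}\quad |h_{CB}|^{2}\geq \frac{\theta\big(\sum_{J_i}P_J|h_{J_iB}|^{2}+\sigma_B^{2}\big)}{P_T}.
\]
As in the earlier proofs I would condition on the event $Z$ that exactly $l$ potential relays act as jammers, reducing the analysis, for each $l$, to the relay-link fading gains and to $l$ independent truncated-exponential jammer gains.

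Next I would encode the MMRS selection via order statistics. For each potential relay $i$, $|h_{Ai}|^{2}$ and $|h_{iB}|^{2}$ are i.i.d.\ unit-mean exponentials, hence $g_i:=\min\{|h_{Ai}|^{2},|h_{iB}|^{2}\}$ is exponential with rate $2$; MMRS chooses Carol as the relay attaining $G:=\max_{1\leq i\leq n} g_i$, which has CDF $(1-e^{-2x})^{n}$. Expanding $(1-e^{-2x})^{n}=\sum_{k=0}^{n}\binom{n}{k}(-1)^{k}e^{-2kx}$ converts the relevant integrals into finite sums over $k$; integrating the kernels $e^{-2kx}$ against the conditional success probability is what produces both the combinatorial prefactor $U=\sum_{k=0}^{n}\binom{n}{k}(-1)^{k}\frac{1}{2k-1}$ and the two exponent regimes visible in~(\ref{psto}) --- the bracketed factor with $(1+z)\alpha$ versus the one with $(2kz+1)\alpha$, and the noise factor $e^{-\theta\sigma_C^{2}/P_T}$ versus $e^{-2k\theta\sigma_C^{2}/P_T}$.

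The jammer-interference average is then handled exactly as in~(\ref{eq:1hop})--(\ref{eq:2hop}): conditioned on $l$ active jammers, each active jammer's gain to Carol (or to Bob) is a truncated exponential on $[0,\alpha]$ with density $e^{-x}/(1-e^{-\alpha})$, so that for $s\geq 0$
\[
\mathbb{E}\big[e^{-s|h_{J_iC}|^{2}}\big]=\frac{1-e^{-(1+s)\alpha}}{(1-e^{-\alpha})(1+s)},
\]
and $\mathbb{E}\big[\exp(-s\sum_{i}P_J|h_{J_iC}|^{2}/P_T)\big]$ factorizes into the $l$-th power appearing in~(\ref{psto}), with $s=\theta$ giving the factor $z$ and $s=2k\theta$ giving $2kz$. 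Multiplying together the binomial coefficients, the noise exponentials, and the $l$-fold jammer terms, and then separating the contribution in which both SIR constraints are simultaneously binding from the one in which only the minimum-link constraint is binding, yields the two summands of~(\ref{psto}) with the respective coefficients $k$ and $k-1$.

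The step I expect to be the main obstacle is the second one. Unlike the RRS case, where the two hops are symmetric and the selected relay is independent of the relay-link gains, here the success event involves $|h_{AC}|^{2}$ and $|h_{CB}|^{2}$ jointly at the argmax relay, while the two hops carry statistically independent jammer interference because they occupy different time slots. Hence the success region is not simply $\{\min\{|h_{AC}|^{2},|h_{CB}|^{2}\}\geq \theta N/P_T\}$ for a single common interference-plus-noise level $N$; one must split according to which of the two links realizes the minimum $G$ and which of the two interference-plus-noise levels is larger, then push the resulting conditional probabilities through the order-statistics expansion while keeping the binomial indices consistent. Everything else is the moment-generating-function bookkeeping for truncated exponentials already carried out in the proof of~(\ref{eq:pto}).
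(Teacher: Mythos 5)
Your two computational ingredients are the right ones --- the rate-$2$ exponential law of $g_i=\min\{|h_{Ai}|^{2},|h_{iB}|^{2}\}$ with its binomially expanded max CDF, and the truncated-exponential moment generating function $\mathbb{E}[e^{-s|h_{J_iC}|^{2}}]=\frac{1-e^{-(1+s)\alpha}}{(1-e^{-\alpha})(1+s)}$ for the jammer average --- and those do reappear in the paper. But your overall decomposition is not the paper's, and the step you yourself flag as ``the main obstacle'' is precisely where your plan does not close. The paper never analyzes the joint two-hop success event. It collapses the outage to a single-hop computation, writing $P_{sto}=P(\text{SIR}_{AC}<\theta)$, and the object it actually needs is not the CDF of $G=\max_i g_i$ but the \emph{marginal} CDF of the selected relay's source-side gain $|h_{AC}|^{2}$ under max-min selection, quoted from \cite{Gao} as $P(|h_{AC}|^{2}<x)=\int_{0}^{x}ne^{-t}(2e^{-t}-e^{-x})(1-e^{-2t})^{n-1}dt=\sum_{k=0}^{n}\binom{n}{k}(-1)^{k}\frac{ke^{-x}+(k-1)e^{-2kx}}{2k-1}$. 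This mixture of $e^{-x}$ and $e^{-2kx}$ kernels is what generates the two summands with prefactors $k$ and $k-1$ and the two exponent regimes $(1+z)$ versus $(2kz+1)$; it is a different object from the pure $\sum_k\binom{n}{k}(-1)^{k}e^{-2kx}$ expansion of the max-of-minima CDF that you propose to integrate against. After substituting $x=\theta(P_J\sum_i|h_{J_iC}|^{2}+\sigma_C^{2})/P_T$ and averaging with the truncated-exponential MGF at $s=z$ and $s=2kz$, the stated formula drops out with no reference to the second hop at all.

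The tell that your route cannot reproduce the theorem as stated is that the right-hand side of~(\ref{psto}) contains only $\sigma_C^{2}$ and the first-hop jammer factors: a genuine evaluation of $P(\text{SIR}_{AC}\geq\theta,\ \text{SIR}_{CB}\geq\theta)$ would necessarily involve $\sigma_B^{2}$ and the independent second-hop interference, and would require the case split over which link attains the per-relay minimum that you describe but do not carry out. So the concrete gaps are two: (i) you are missing the reduction of $P(\min\{\text{SIR}_{AC},\text{SIR}_{CB}\}<\theta)$ to a first-hop-only probability (which the paper asserts without justification, but which is indispensable to reach the claimed expression), and (ii) you are missing the marginal distribution of $|h_{AC}|^{2}$ for the argmax relay, which is the actual source of the $Uk$ and $U(k-1)$ structure. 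Without (i) your plan terminates at the obstacle you identified, and without (ii) even the single-hop branch of your calculation would not produce the two-term form of~(\ref{psto}).
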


\begin{proof}
For each relay $R_{k}$ where $k = 1,2,...,n$, let $M_{k} = \text{min}\{{|h_{A{R_{k}}}|}^{2},{|h_{{R_{k}}B}|}^{2}\}$,
and $D_{k}$ denote the event that Alice select the relay. We then have this expression
\begin{equation*}
D_{k} \triangleq {\bigcap_{v = 1,v \neq k}^n} (M_{v} \leq M_{k}),
\end{equation*}
where $M_{k}$ is an exponential random variable with mean 1/2.

Next, if $M_{k} = |h_{A{R_{k}}}|^{2}$, then based on the previous work \cite{Gao} we have
\begin{flalign}\label{eq:slink}
P({|h_{A{C}}|}^{2} < x)
&= \int_{0}^{x} ne^{-t}(2e^{-t}-e^{-x})(1-e^{-2t})^{n-1}dt  \nonumber\\
&= (1-e^{-2x})^{n}-ne^{-x} \int_{0}^{x} e^{-t}(1-e^{-2t})^{n-1}dt  \nonumber\\
&= \sum_{0}^{n}\binom{n}{k}(-1)^{k}\frac{ke^{-x}+(k-1)e^{-2kx}}{2k-1}.
\end{flalign}

The transmission outage probability can be expressed as
\begin{flalign} \label{zhengpstom}
 P_{sto}&=P(\text{SIR}_{AC}<\theta) \nonumber\\  
&=Uk\text{exp}\left[-\frac{\theta(P_{J}\sum_{i=0}^{l}|h_{J_{i}C}|^{2}+\sigma_{C}^2)}{P_{T}}\right]\nonumber\\
&+U(k-1)\text{exp}\left[-2k\frac{\theta(P_{J}\sum_{i=0}^{l}|h_{J_{i}C}|^{2}+\sigma_{C}^2)}{P_{T}}\right]\nonumber\\
&=Uk\text{exp}\left(\frac{-\theta\sigma_{C}^2}{P_{T}}\right)\prod_{i=0}^{l}\mathbb{E}_{{|h_{J_{i}C}|}^{2}}\left[\text{exp}(-z\sum_{i=0}^{l}|h_{J_{i}C}|^{2})\right]\nonumber\\
&+U(k-1)\text{exp}\left(\frac{-2k\theta\sigma_{C}^2}{P_{T}}\right)\nonumber\\
&\prod_{i=0}^{l}\mathbb{E}_{{|h_{J_{i}C}|}^{2}}\left[\text{exp}(-2kz\sum_{i=0}^{l}|h_{J_{i}C}|^{2})\right],
\end{flalign}
(\ref{psto}) can be obtained.
\end{proof}

We obtain the covert rate $R'_{AB}$ from Alice to Bob under the MMRS scheme as follows.
\begin{flalign} \label{eq:c}
R'_{AB}=(1-P_{sto})\text{min}\{R_{AC},R_{CB}\},
\end{flalign}
where the achievable covert rate $R_{AC}$ from Alice to Carol is expressed as
 $R_{AC}=\text{log}_{2}(1+\text{SIR}_{AC})$,
and the achievable rate $R_{CB}$ from Carol to Bob is expressed as
$R_{CB}=\text{log}_{2}(1+\text{SIR}_{CB})$.

Although it is first determined that the link meets the transmit condition before sending the covert messages, then the covert transmission itself will not occur outage, but the setting of $\alpha$ affects the probability of the link satisfying the requirement in a time slot. We should ensure that within a certain period of time, more time slots are sent for covert messages. Hence, we should promise that covert transmission probability must be greater than a threshold to make Alice have more opportunities to send covert messages.

\subsection{Covert Rate Maximization}
The objective of covert rate maximization is to maximize the covert rate $R'_{AB}$ while maintaining an
arbitrary high detection error probability at Willie. It can be formulated as the following optimization problem.

\begin{subequations}\label{optimalC2}
\begin{alignat}{2}
&\text{Maximize}\quad R'_{AB} \\
&\text{\textit{s.t.}}  \quad \quad\quad\,\,\,\,\,\zeta^{*}(P_{T}) \geq 1 - \varepsilon_{c},  \label{const:zeta1}  \\
&\,\quad\quad\quad\quad\,\,\,\,\,P_{T}\leq P_{max}, \label{const:power1}\\
& \,\quad\quad\quad\quad\,\,\,\,\,\varepsilon_{c}\in(0,1),
\end{alignat}\nonumber
\end{subequations}
where $\varepsilon_{c}$ is the  covertness requirement.
(\ref{const:zeta1}) represents covert constraint, and (\ref{const:power1})
represents transmit power constraint for source and relay.
We also use the stochastic gradient descent algorithm to solve the optimization problem in (\ref{optimalC2}).


\section{Numerical Results}

This section first validates our theoretical models and then explores the impact of system parameters on covert rate performance.


\subsection{ Model Validation}
To ensure the efficiency of our theoretical covert rate models in (22) and (34),
we only need to validate the transmission outage probabilities under the RSS and MMRS schemes.
\begin{figure}[]
\centering
\subfigure[Transmission outage probability validation under RRS.]{
\label{Fig.sub.1}
\includegraphics[width=0.4\textwidth]{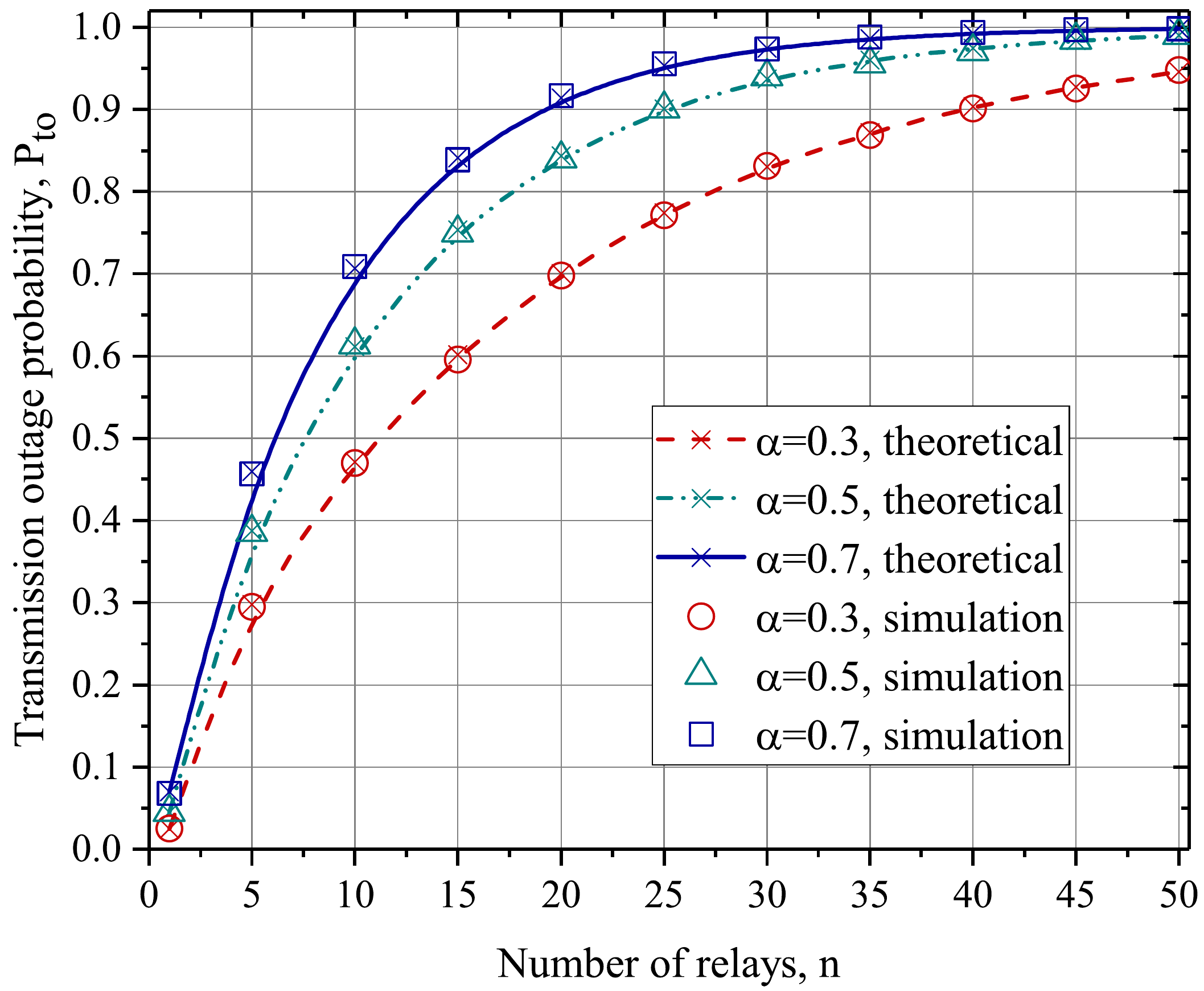}}
\subfigure[Transmission outage probability validation under MMRS.]{
\label{Fig.sub.2}
\includegraphics[width=0.4\textwidth]{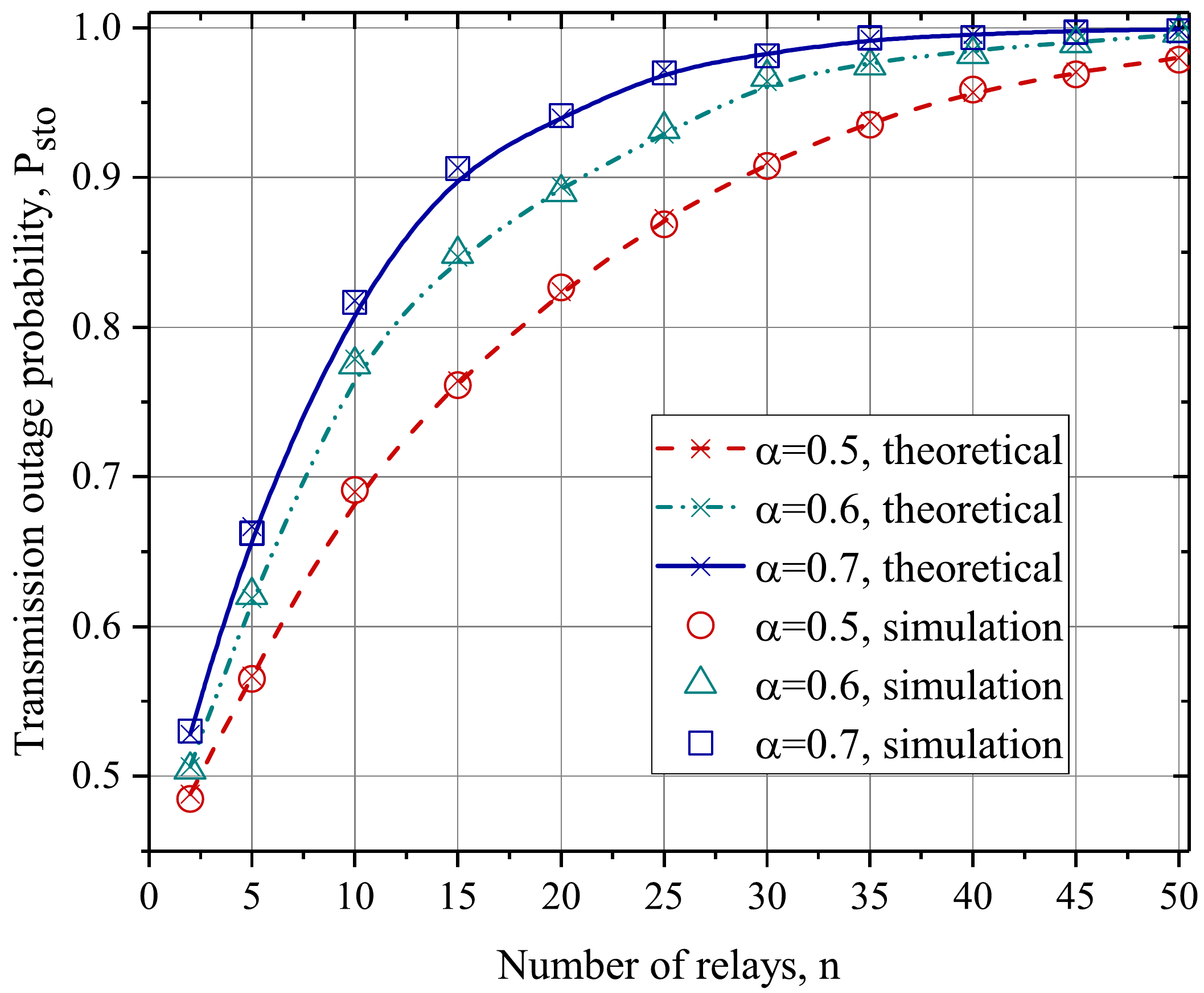}}
\caption{Transmission outage probability validation.}
\label{Fig.s}
\end{figure}
Towards this end, we compare the simulation results with the theoretical ones under these two schemes.
Specifically, the simulated transmission outage probability is calculated as the average value of
$10^5$ independent simulations. Here, the simulated probability equals the ratio of the number of transmission
outages to the total number of transmissions.
For the scenario with the setting of the number of relay $n=50$, the threshold of the cooperative jamming $\alpha = \{0.3, 0.5, 0.7\}$, transmission power of Alice and Carol $P_{T}=5$ W, outage threshold $\theta=1$,
jamming power $P_{J}=1$ W and noise $\sigma_{C}^{2}=\sigma_{B}^{2}=-5$ dB.
We can see from Fig. 2 that for each $\alpha$,
the theoretical transmission outage probability $P_{to}$ almost matches with the simulation one under
these two schemes, indicating that our theoretical model can well capture the covert rate performance under each scheme.
Another observation indicates that as the number of relays increases, $P_{to}$ increases.
Based on the cooperative jamming scheme introduced in Section III-D, we know that as the number of relays or $\alpha$ increases, the number of jamming relays increases.
This will lead to the decrease of the SIRs at the relay Carol and the destination Bob,
which further leads to the increase of $P_{to}$.

To achieve maximum covert rate based on the optimization problems of (23) and (35),
we now validate the theoretical detection error probability $\zeta$ under the two relay selection schemes
via the comparison between theoretical results and simulation ones, where
each simulated value is calculated as the average value of
$10^5$ independent simulations.
For the scenario of $n=20$, $P_{T}=5$ W, $\alpha=0.3$, $\theta=1$, $P_{J}=1$ W, $|h_{A,C}|^{2}=|h_{C,B}|^{2}$ and $\sigma_{W}^{2}=-5$ dB,
We can observe from Fig. 3 that the theoretical $\zeta$ almost matches the simulation one under each scheme. This demonstrates that our theoretical results can well predict the simulation results under these two relay selection schemes.

\begin{figure}[]
\centering
\includegraphics[width=0.4\textwidth]{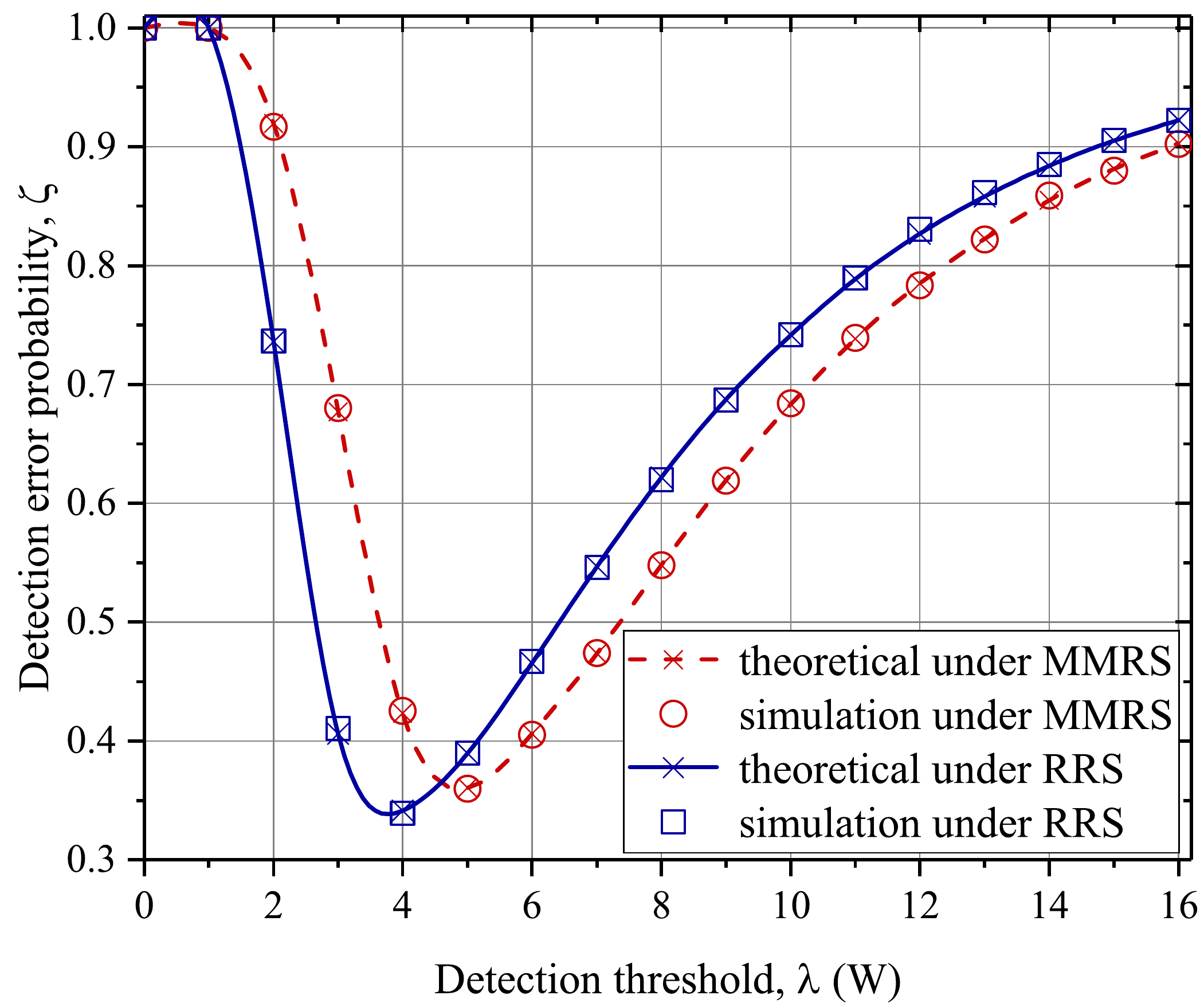}\\
\DeclareGraphicsExtensions.
\caption{Detection error probability validation.}
\label{Fig.3}
\end{figure}

We can also observe from Fig. 3 that
as $\lambda$ increases, $\zeta$ first decreases and then increases under both the schemes.
This can be explained as follows.
We know that $\zeta$ is the sum of false alarm probability $\mathbb{P}_{FA}$ and missed detection probability $\mathbb{P}_{MD}$.
Based on our theoretical analysis of these two probabilities, we know that $\mathbb{P}_{FA}$ is a decreasing function of $\lambda$ while $\mathbb{P}_{MD}$ is an increasing function.
As $\lambda$ is relatively small, the former one dominates $\zeta$, which leads to the decrease of $\zeta$ with $\lambda$.
On the other hand, as $\lambda$ further increases, the latter one dominates $\zeta$, which leads to the increase of $\zeta$.
There exists a minimum $\zeta$ corresponding to the maximum transmission power limit of  $P_{T}$, which
means that Willie has the strongest detection ability to detect the transmission of two hops.



\subsection{ Covert Performance Analysis }

We first explore the impact of $P_{T}$ on the covert rate under these two relay selection schemes.
We summarize the numerical results in Fig.4 with the setting of $\alpha = \{0.3, 0.5, 0.7\}$ and $\sigma_{C}^{2}=\sigma_{B}^{2}=-5$ dB.
It can be observed from Fig. 4 that as $P_{T}$ increases, the covert rates increase under both the schemes.
This is because the increase of $P_{T}$ leads to the increase of the SIR at the receivers Carol and Bob.
A careful observation from Fig. 4 indicates that for each fixed $P_{T}$, as $\alpha$ further increases, the covert rate will decrease.
The reason for this phenomenon is that the number of relays satisfying the selection conditions of the jammer increases, which increases the total jamming power, leading to a decrease in the SIRs at the receiver.
We can also observe that for each fixed $P_{T}$, the covert rate $R_{AB}$ under the RRS scheme in Fig. 4 (a)
is lower than that $R'_{AB}$ under the MMRS scheme in Fig.4 (b).
This is due to the following reason.
The channel quality of the two-hop transmissions under the RRS scheme is usually lower than that under the MMRS scheme,
which means that the transmission outage probability under the former is also usually higher than that under the latter.

\begin{figure}[]
\centering
\subfigure[$R_{AB}$ vs. $P_{T}$]{
\label{Fig.sub.1}
\includegraphics[width=0.4\textwidth]{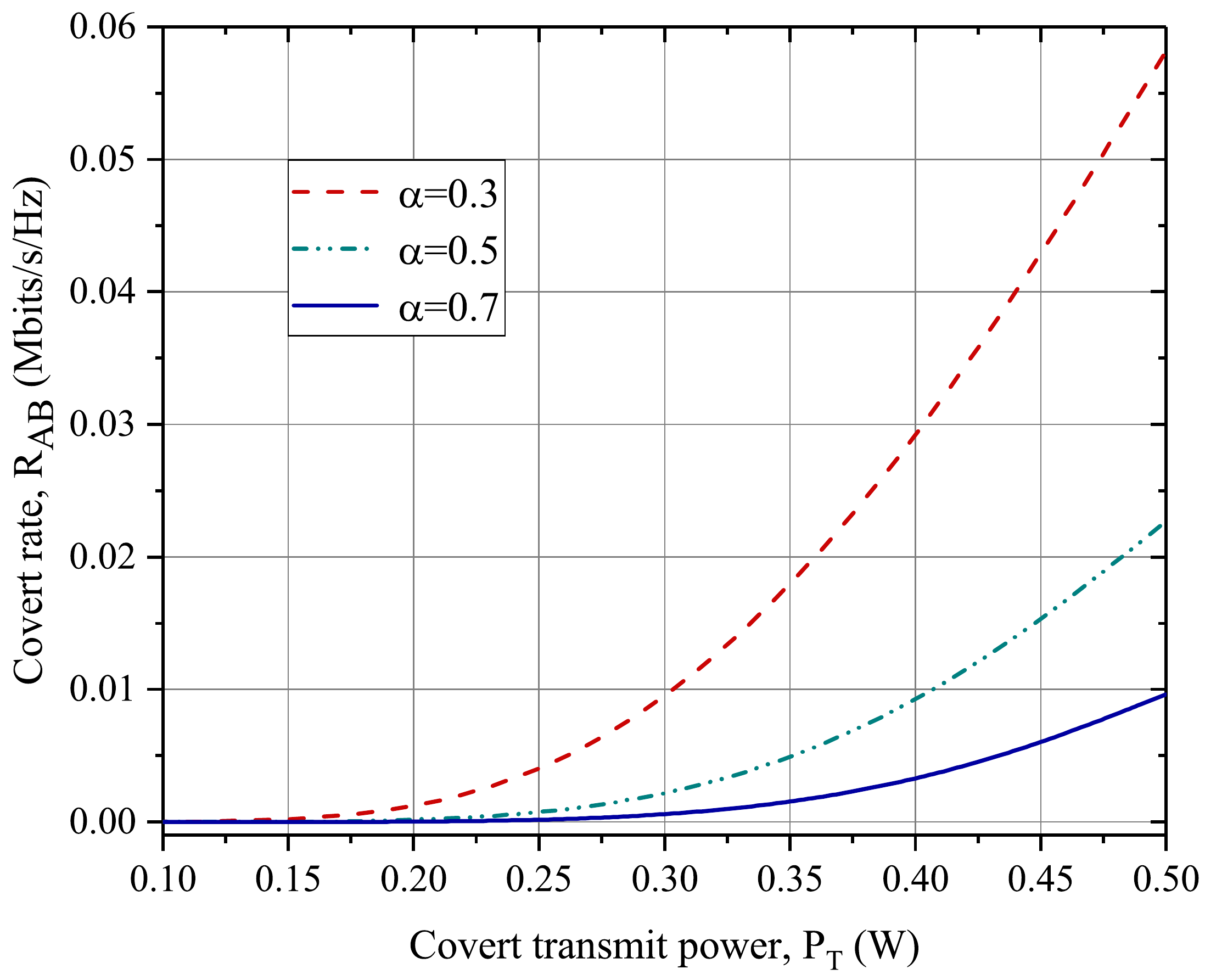}}
\subfigure[$R'_{AB}$ vs. $P_{T}$]{
\label{Fig.sub.2}
\includegraphics[width=0.4\textwidth]{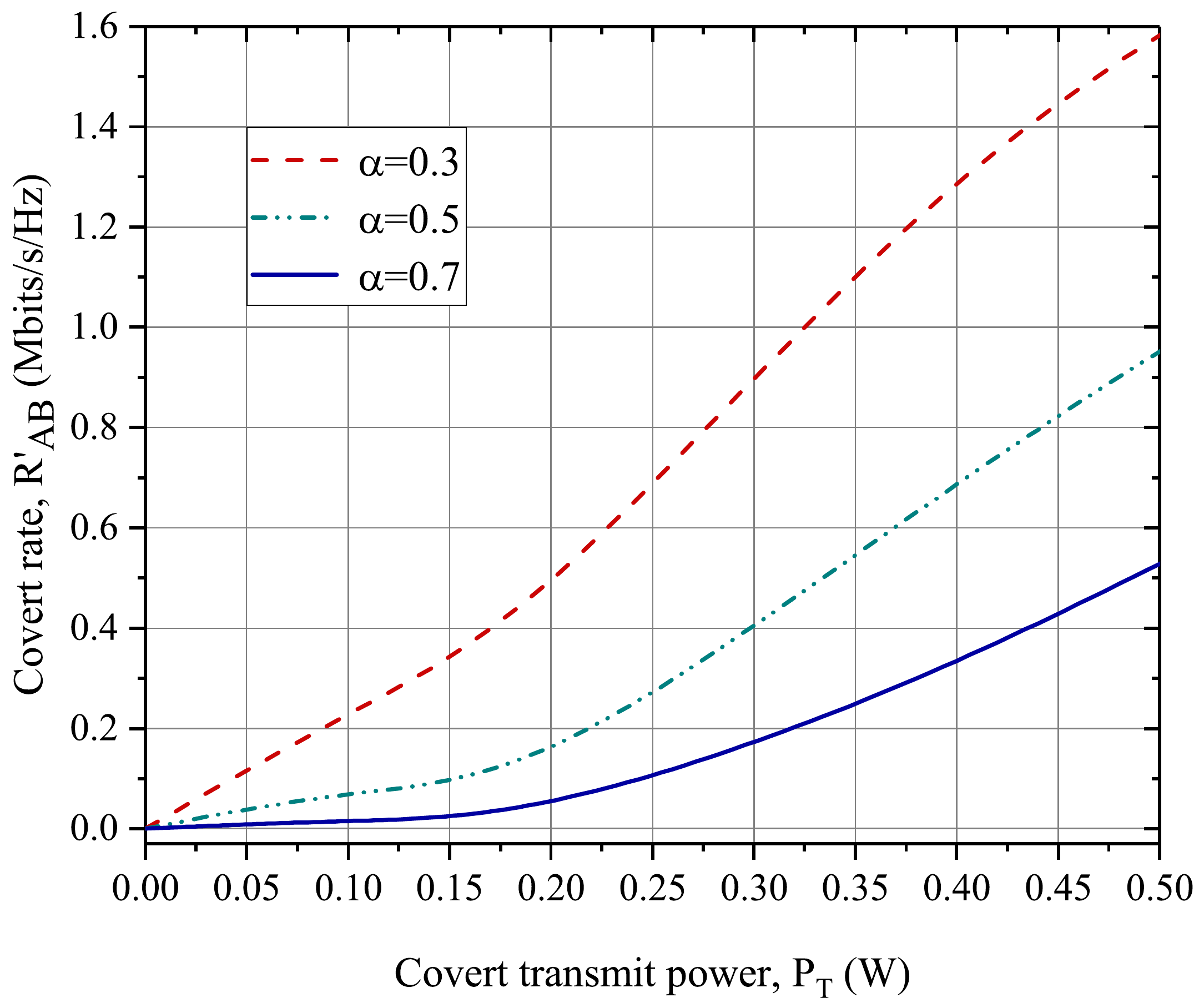}}
\caption{The impact of $P_{T}$ on covert rate.}
\label{Fig.lable}
\end{figure}
\begin{figure}[]
\centering
\subfigure[$R_{AB}$ vs. $P_{J}$]{
\label{Fig.sub.1}
\includegraphics[width=0.4\textwidth]{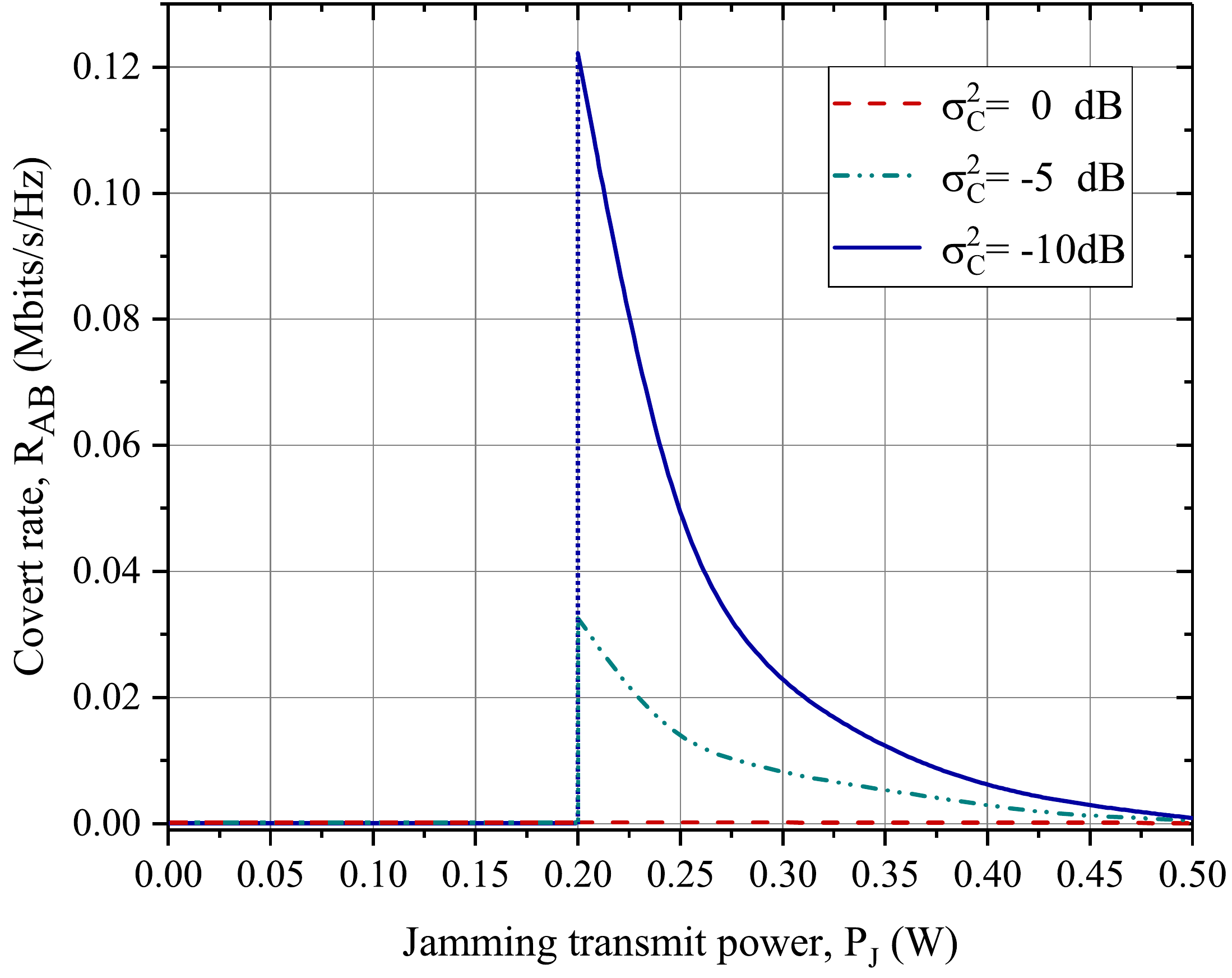}}
\subfigure[$R'_{AB}$ vs. $P_{J}$]{
\label{Fig.sub.2}
\includegraphics[width=0.4\textwidth]{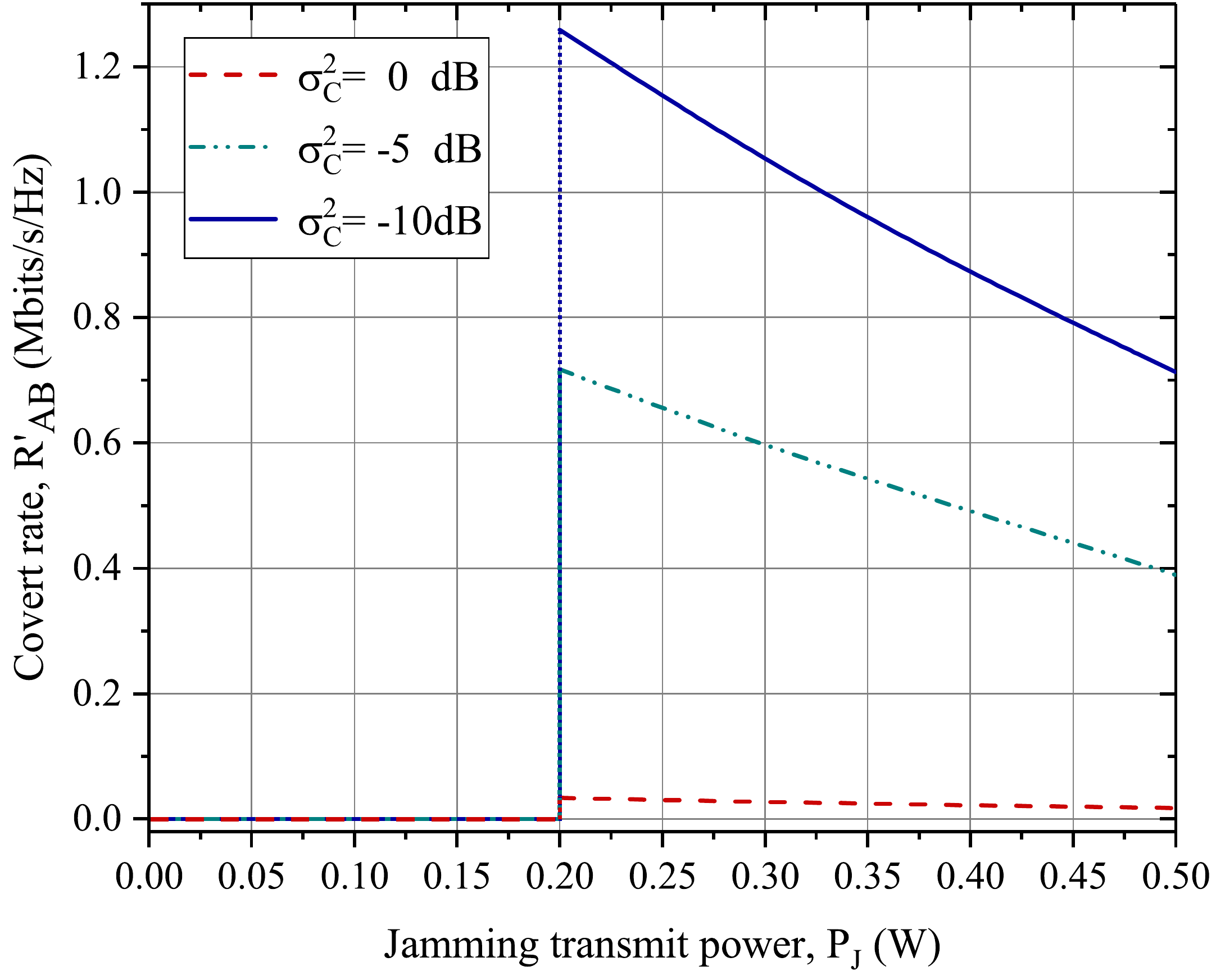}}
\caption{The impact of $P_{J}$ on covert rate.}
\label{Fig.lable}
\end{figure}

To investigate the impact of jamming transmit power $P_J$ on the covert rates under the two schemes,
we summarize in Fig. 5 how the covert rates vary with $P_J$ with the setting of $n=20$, $\alpha = 0.3$ and $\sigma_{C}^{2}=\sigma_{B}^{2}=\{0, -5, -10\}$ dB.

We can see from Fig. 5 that as $P_J$ increases,
the covert rate first increases and then decreases under each scheme.
This is because increasing $P_J$ has a two-fold effect on the covert rate.
It can confuse the detection of Willie, which leads to an increase in the covert rate.
Meanwhile, it can also interfere with the source-relay-destination links, which leads to a decrease in the covert rate.
As $P_J$ is relatively small, the former dominates the covert rate, and thus the covert rate increases with the increase of $P_J$.
As $P_J$ becomes larger, the latter dominates the covert rate, and thus the covert rate decreases as $P_J$ further increases.
Therefore, we can set a proper $P_J$ to improve the covert rate performance under each scheme.

\color{black}

\subsection{Performance Optimization and Comparison}

By optimizing the covert transmit power $P_T$, we explore the impact of covertness requirement $\varepsilon_{c}$ on the maximum covert rates under the two relay selection schemes as shown in Fig. 6, where $R^{*}_{AB}$ and $R^{'*}_{AB}$ represent the maximum covert rates under the
 RRS and MMRS, respectively.
For the setting of $n=20$, $\alpha = \{0.3, 0.5, 0.7\}$, and $\sigma_{C}^{2}=\sigma_{B}^{2}=-5$ dB,
we can observe from Fig. 6 that as $\varepsilon_{c}$ increases, the maximum covert rates increase under both the two schemes.
This is because the increase of $\varepsilon_{c}$ is equivalent to the increase of the probability with which the two-hop transmissions are detected by Willie.
This means that $P_T$ can increase, which leads to the increase of the maximum covert rates.

Finally, we conduct the performance comparison between the two relay selection schemes with jamming signals and these with no jamming signal,
as shown in Fig. 7.
It can be seen from Fig. 7 that the maximum covert rate under each scheme with jamming signals is larger than
that with no jamming signal.
This can be explained as follows.
With jamming signals, the covert requirement constraint is easier to be satisfied than that with no jamming signal.
Thus, the covert transmit power with jamming signals is larger than that with no jamming signal,
which leads to a larger maximum covert rate with jamming signals than no jamming signal.
Fig. 7 also illustrates that for each fixed setting of $\varepsilon_{c}$,  the maximum covert rate under MMRS is
larger than that under RRS.

\begin{figure}[]
\centering
\subfigure[$R_{AB}^{*}$ vs. $\varepsilon_{c}$]{
\label{Fig.sub.1}
\includegraphics[width=0.4\textwidth]{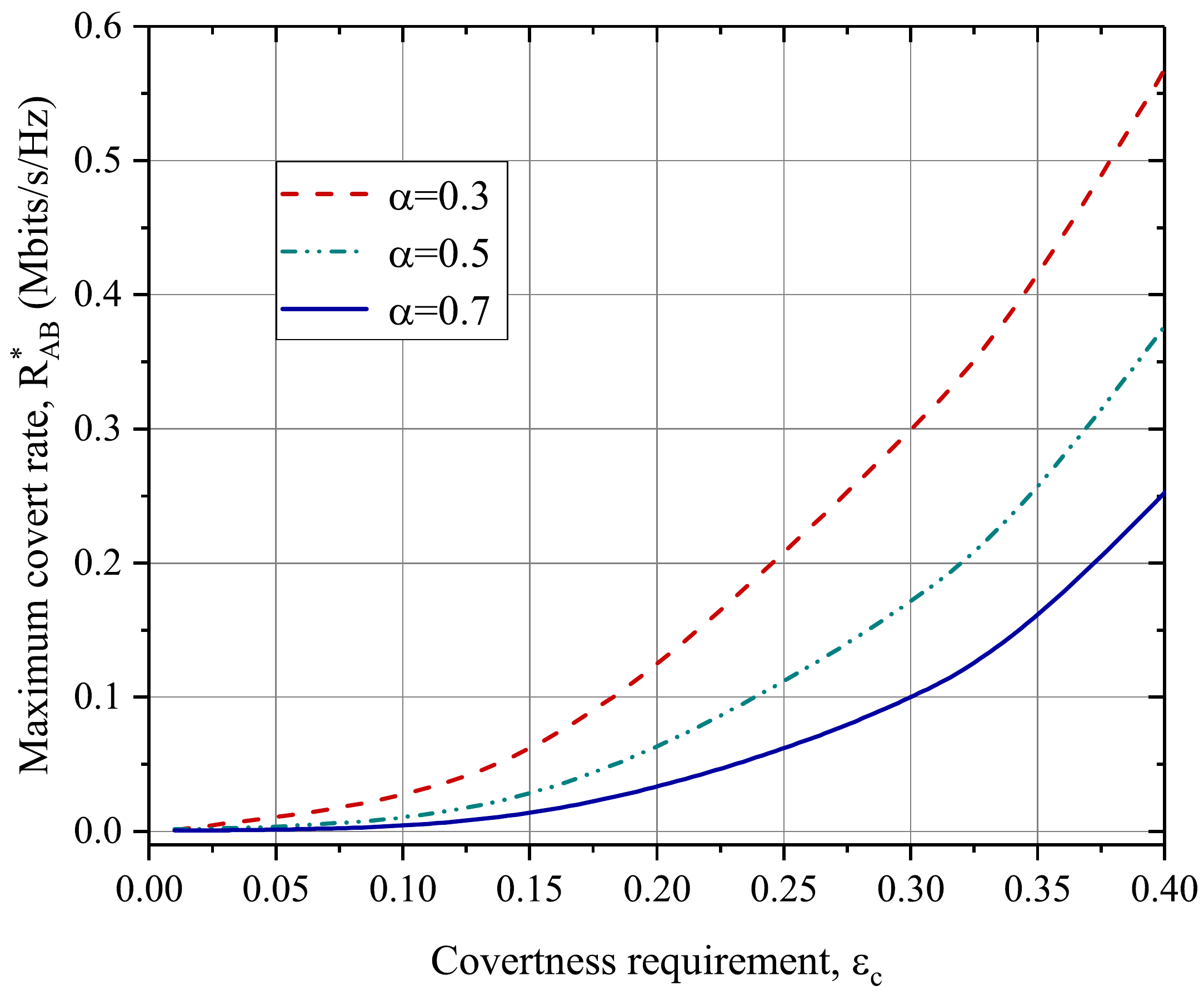}}
\subfigure[$R_{AB}^{'*}$ vs. $\varepsilon_{c}$]{
\label{Fig.sub.2}
\includegraphics[width=0.4\textwidth]{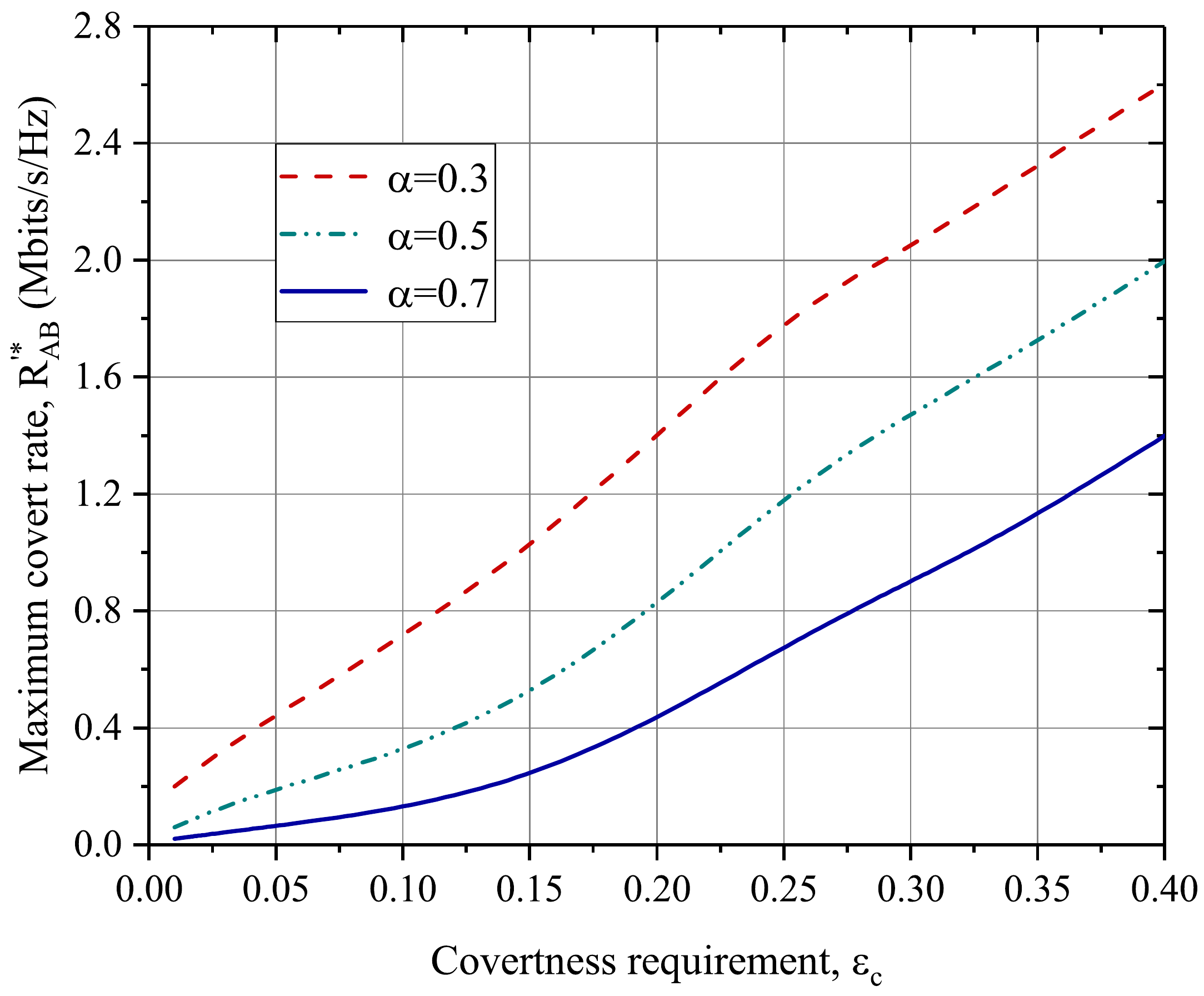}}
\caption{The impact of $\varepsilon_{c}$ on maximum covert rate.}
\label{Fig.lable}
\end{figure}

\begin{figure}[t]
\centering
\includegraphics[width=0.4\textwidth]{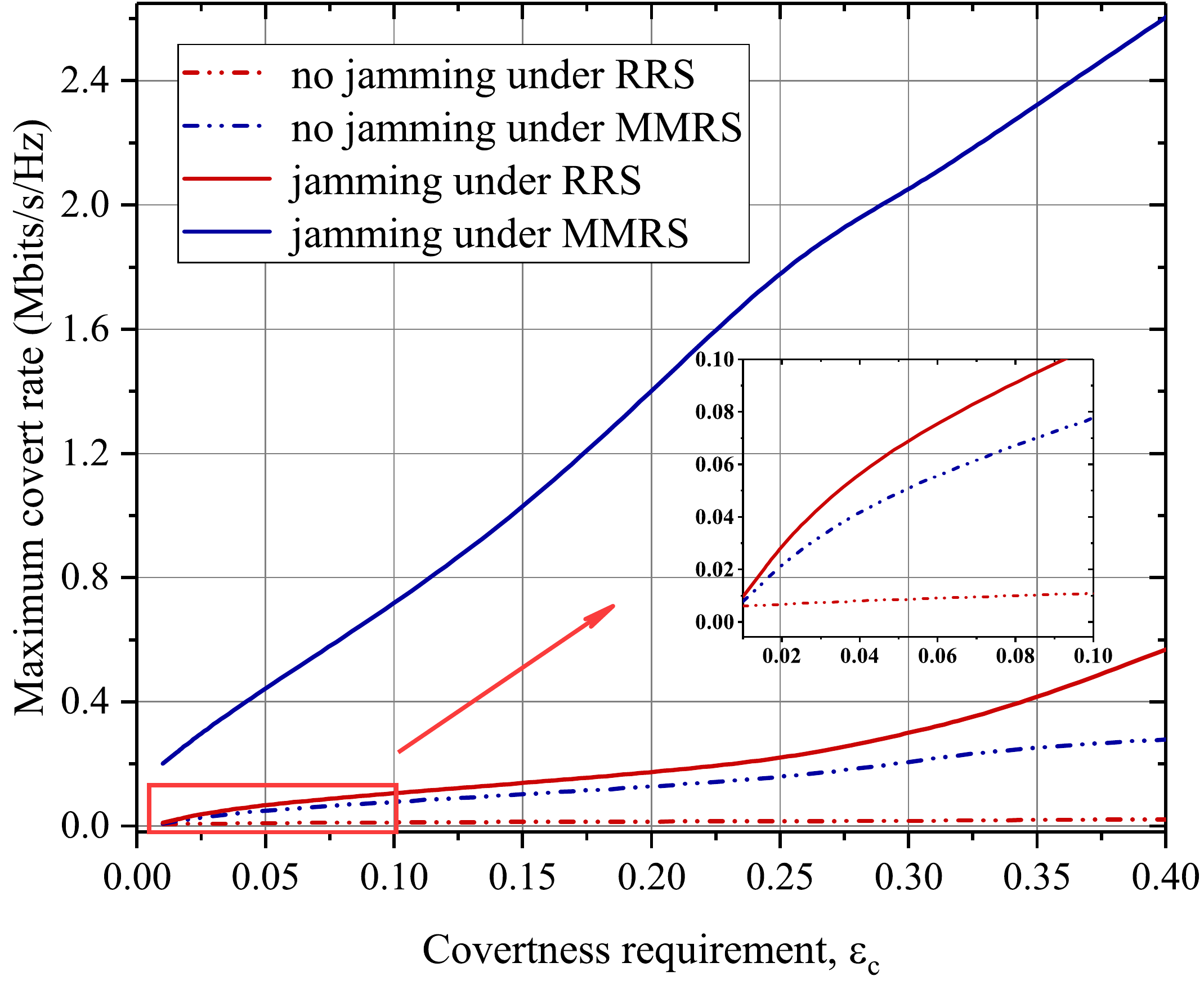}\\
\DeclareGraphicsExtensions.
\caption{Maximum covert rate comparison.}
\label{Fig.com}
\end{figure}

\section{Conclusion}

This paper explored the covert communications in a wireless relay system, where
two joint CJ and relay selection schemes are proposed for improving covert performance.
Based on each joint scheme, we developed a theoretical model to characterize the covert rate, and further maximize the covert rate by optimal transmit power control.
Finally, simulation/numerical results were provided to validate our theoretical models.
Specifically, increasing the covert transmit power can enhance the covert rate performance under each joint scheme.

%
%
%
%


%

\appendices




\ifCLASSOPTIONcaptionsoff
  \newpage
\fi

\end{document}